\documentclass[11pt,a4paper]{article}

\usepackage{graphicx}
\usepackage{amssymb}
\usepackage{amsfonts}
\usepackage{amsbsy}
\usepackage{amsmath}
\usepackage{amsthm}
\usepackage{geometry}

\usepackage{tikz}              
\usetikzlibrary{arrows.meta}  
\usetikzlibrary{positioning}  
      
\usepackage{subcaption}   
\usepackage[ruled]{algorithm2e}

\newcommand{\beq}{\begin{equation}}
\newcommand{\eeq}{\end{equation}}

\newcommand{\bmat}{\left[ \begin{array}}
\newcommand{\emat}{\end{array} \right]}

\newcommand{\rank}{\mbox{rank}}

\newtheorem{theorem}{Theorem}[section]
\theoremstyle{definition}
\newtheorem{definition}{Definition}[section]
\newtheorem{lemma}{Lemma}[section]
\newtheorem{example}{Example}[section]

\title{Error-correction Decoding for 2D Convolutional Codes with a Generalized Viterbi Algorithm}
\usepackage{xcolor}
\usepackage{authblk}
\usepackage{comment}
\author[1]{Zita Abreu}
\author[2]{Elisa Junghans}
\author[3]{Julia Lieb}
\author[4]{Raquel Pinto}
\affil[1]{CIDMA, University of Madeira, Portugal\thanks{zita.abreu@staff.uma.pt}}
\affil[2]{TU Ilmenau, Germany\thanks{elisa.junghans@tu-ilmenau.de}}
\affil[3]{University of Bayreuth, Germany\thanks{Julia.Lieb@uni-bayreuth.de}}
\affil[4 ]{CIDMA, University of Aveiro, Portugal\thanks{raquel@ua.pt}}

\begin{document}

\maketitle

\begin{abstract}
Two-dimensional (2D) convolutional codes are a generalization of one-dimensional (1D) convolutional codes and are naturally suitable for encoding two-dimensional data. Although their theoretical properties are well established, practical applications are limited due to the lack of error-correction decoding algorithms. While the Viterbi algorithm provides an efficient decoding method for 1D convolutional codes, no analogous approach exists for the 2D case. In this paper, we introduce a decoding algorithm for 2D convolutional codes, called the 2D Viterbi algorithm, extending the principles of the classical Viterbi algorithm to the two-dimensional setting. The presented 2D Viterbi algorithm works with linear systems representations of 2D convolutional codes and its complexity depends on the dimension of the corresponding 2D linear system. Hence, we also provide several algorithms to obtain 2D realizations of small dimension.
\end{abstract}

\medskip

\noindent \textbf{KeyWords : } 2D convolutional code, error-correction decoding, Viterbi algorithm, linear systems representation for 2D convolutional codes, local reachability, local observability 

\medskip

\section{Introduction}
Two-dimensional (2D) convolutional codes are a natural generalization of convolutional codes, to which we will refer in this paper as one-dimensional
(1D) convolutional codes. 2D convolutional codes deal with data represented by two-dimensional sequences and therefore are naturally fitting to deal with two-dimensional information, like pictures, data storage, etc.

The theory of 2D convolutional codes is very well understood and has been investigated by many authors in the last years. These codes have been introduced in \cite{FoVa94,valcherF94}. In these papers the algebraic properties of 2D convolutional codes are studied in detail. In \cite{ClNaPePi12}, the authors investigated the distance properties of 2D convolutional codes of rate $1/n$ and obtained an upper bound (called the 2D generalized Singleton bound) for these codes and presented  constructions of 2D convolutional codes with optimal distance. In \cite{ClNaPePi16}, the authors generalized these concepts and results for 2D convolutional codes of any rate and presented constructions of such codes for certain parameters. In \cite{AlNaPi15}, further constructions were presented for a more general set of parameters. Input-state-output representations of 2D systems were studied in \cite{NaPePi10}.

Even though the theory of 2D convolutional codes is very well developed, these codes are not used in practical applications. One of the main reasons is that there exists no decoding algorithm for error-correction with these codes. In \cite{ClNaPiSi16} and in \cite{LiPi23}, decoding algorithms for 2D convolutional codes were presented for an erasure channel, i.e., for a channel in which a symbol arrives correctly or does not arrive at all. However, when we consider channels in which errors occur, there is no procedure to detect and correct errors that appear during the transmission of the information.

The most famous error-correction decoding algorithm for 1D convolutional codes is the Viterbi algorithm. This algorithm decodes a received word after channel transmission by considering all codewords of the code up to a certain length and then selecting the codeword which is closest to the received word. This algorithm applies techniques from dynamic programming to the linear systems representation of the convolutional code. In this paper we define a similar error-correction decoding algorithm for 2D convolutional codes, which we will call the 2D Viterbi algorithm.
The computational complexity of this algorithm depends on the dimension of the underlying 2D linear system. Finding a realization for a 2D encoder of minimal dimension is a challenging open problem. However, in this paper, we provide several strategies to obtain 2D realizations of small dimensions.

The paper is organized as follows. Section 2 presents the basic background on 2D convolutional codes including system representations of 2D convolutional codes. In Section 3, we introduce a decoding method for 2D convolutional codes, called the 2D Viterbi algorithm, which extends the ideas of the classical Viterbi algorithm to the two-dimensional case. In Section 4, we analyze the complexity of the 2D Viterbi algorithm and discuss ways to obtain an encoder realization of small dimension in order to decrease this decoding complexity.

\section{Two-dimensional convolutional codes}\label{systems}

We denote by $\mathbb F$ any finite field. Two-dimensional convolutional codes deal with data represented by a two-dimensional (2D) compact support sequence
$$
\begin{array}{cccc}
w: & \mathbb N_{0}^2 & \rightarrow & \mathbb F^n \\
& (i,j) & \mapsto w_{ij}.
\end{array}
$$
Such 2D sequences can be written as polynomials in the two indeterminates $z_1$ and $z_2$, $$w(z_1,z_2)=\displaystyle \sum_{(i,j) \in \mathbb N_0^2} w_{ij} z_1^i
z_2^j$$
with $w_{ij}\in\mathbb F$.
Let us denote by $\mathbb F[z_1,z_2]$ the ring of polynomials in $z_1$ and $z_2$.

\begin{definition}
A two-dimensional (2D) $(n,k)$ convolutional code $\cal C$ is a free $\mathbb{F}[z_1,z_2]$-submodule of $\mathbb{F}[z_1,z_2]^n$ of rank $k$. A matrix $G(z_1,z_2) \in \mathbb F[z_1,z_2]^{n \times k}$ whose columns form a basis of $\cal C$ is called an encoder of $\cal C$ and it follows that
\begin{eqnarray*}
{\cal C} & = & \text{Im}_{\mathbb{F}[z_1,z_2]}G(z_1,z_2)\\
& = & \{G(z_1,z_2)u(z_1,z_2) \, | \, u(z_1,z_2) \in \mathbb F[z_1,z_2]^k\}.
\end{eqnarray*}
\end{definition}
Given a polynomial matrix $G(z_1,z_2) \in \mathbb F[z_1,z_2]^{n \times k}$ we define the $i$-th column degree of $G(z_1,z_2)$ as the maximum degree of the entries of the $i$-th column of $G(z_1,z_2)$ where the (total) degree of a polynomial in two indeterminates $p(z_1,z_2) = \displaystyle \sum_{(i,j) \in \mathbb N_0^2} p_{ij}z_1^iz_2^j$, with $p_{ij} \in \mathbb F$ is defined as usual as $\deg(p(z_1,z_2))= \mbox{max}\{i+j : p_{ij} \neq 0\}$.

A unimodular matrix $U(z_1,z_2) \in \mathbb F[z_1,z_2]^{k \times k}$ is a polynomial matrix that admits a polynomial inverse matrix. $U(z_1,z_2)$ is unimodular if and only if $\det(U(z_1,z_2)) \in \mathbb{F} \backslash \{0\}$; see \cite{levy81}. Two encoders $G(z_1,z_2), \tilde G(z_1,z_2) \in \mathbb F[z_1,z_2]^{n \times k}$ of a 2D convolutional code differ by right multiplication with a unimodular matrix, i.e., $G(z_1,z_2)=\tilde{G}(z_1,z_2)U(z_1,z_2)$, for some unimodular matrix $U(z_1,z_2) \in \mathbb F[z_1,z_2]^{k \times k}$. \\

Convolutional codes can be represented by linear systems.
 A 2D linear system models relations between 2D data.  There are several state-space models of a 2D linear system. In the Fornasini-Marchesini state-space model (see \cite{FoMa86}) the data relation is given by the updating equations
\begin{align} \label{system}
x(i\!+\!1,j\!+1\!) & = A_1 x(i,j\!+\!1) + A_2 x(i\!+\!1,j) + B_1 u(i,j\!+\!1) + B_2 u(i\!+\!1,j) \nonumber\\
w(i,j) & =  C x(i,j) + D u(i,j),
\end{align}
\noindent where $A_1, A_2 \in \mathbb F^{s \times s}$, $B_1, B_2 \in \mathbb F^{s\times k}$, $C \in \mathbb F^{n \times s}$, $D \in \mathbb F^{n \times k}$, $s, n, k \in \mathbb{N}_{0}$, and the system is denoted by $\Sigma\!=\!\left(A_1,A_2,B_1,B_2,C,D\right)$.  We say that $\Sigma$ has dimension $s$. The vectors $x(i,j)\in \mathbb F^s$, $u(i,j)\in \mathbb F^k$ and $w(i,j)\in \mathbb F^n$ represent the local state, input and output at $(i,j)$, respectively.

We can define sets that partition the plane $\mathbb Z \times \mathbb Z$ in two regions. The $i$-th separation set of $\mathbb Z \times \mathbb Z$ is defined as 
$$
{\cal S}_i=\{(r,t)\in\mathbb Z\times\mathbb Z \, : \, r+t=i\}
$$
and the $i$-th global state as
$$
\begin{array}{cccc}
{\cal X}_i: & \mathbb Z & \rightarrow & \mathbb F^s \\
& r & \mapsto & x(r,i-r)
\end{array}
$$
constituted by the sequence of local states with support in ${\cal S}_i$.
We call the future of a point $(r,t)$ on $\mathbb Z \times \mathbb Z$ the set
$$
{\cal F}_{(r,t)}=\{(i,j) \in \mathbb Z \times \mathbb Z \, :  r \leq i, t \leq j\}
$$
and the future region with respect to ${\cal S}_i$ is 
$$
{\cal F}_{{\cal S}_i}=\bigcup_{(r,t) \in {\cal S}_i}{\cal F}_{(r,t)}.
$$

Given a system $\Sigma\!=\!\left(A_1,A_2,B_1,B_2,C,D\right)$ and considering ${\cal X}_0=\bf{0}$, it follows that for any input $u(z_1,z_2)$ with support in ${\cal F}_{{\cal S}_0}$ the corresponding state is given by 
$$
x(z_1,z_2)=(I-A_1z_1-A_2z_2)^{-1}(B_1z_1+B_2z_2)u(z_1,z_2)
$$
and the corresponding output is given by
$$
w(z_1,z_2)=\left(C(I-A_1z_1-A_2z_2)^{-1}(B_1z_1+B_2z_2)+D \right)u(z_1,z_2).
$$
The matrix $T(z_1,z_2)= C(I-A_1z_1-A_2z_2)^{-1}(B_1z_1+B_2z_2)+D$ is called the transfer matrix of $\Sigma$; see \cite{FoMa86}.
We say that a 2D system $\Sigma\!=\!\left(A_1,A_2,B_1,B_2,C,D\right)$ is a realization of an encoder $G(z_1,z_2)$ of a 2D convolutional code $\cal C$ if $$G(z_1,z_2)= C(I-A_1z_1-A_2z_2)^{-1}(B_1z_1+B_2z_2)+D$$ and the codewords of $\cal C$ are the outputs of $\Sigma$ corresponding to the inputs $u(z_1,z_2) \in \mathbb F[z_1,z_2]^k$ and ${\cal X}_0=\bf{0}$. We call $\Sigma$ a system representation of $\cal C$.

In \cite{BiFoMa89}, the authors give a construction procedure for a realization of a rational matrix in two indeterminates. Next we give this algorithm that constructs such a realization of an encoder $G(z_1,z_2) \in \mathbb F[z_1,z_2]^{n \times k}$ of a 2D convolutional code. For that we consider for any integer $\ell \geq 1$ the constant matrices\\
\[
A_{1}^{(\ell)}=\left[\begin{matrix}
0 & \cdots & 0 & \vline & M_{\ell} & & & \\
\vdots &  & \vdots & \vline & & M_{\ell-1} & & \\
\vdots &  & \vdots & \vline   & & & \ddots & \\
0 & \cdots & 0 & \vline &  & & & M_2 \\
\hline \\
0 & \cdots &\cdots  & \cdots & \cdots & \cdots & \cdots & 0 \\
0 & \cdots &\cdots  & \cdots & \cdots & \cdots & \cdots & 0
\end{matrix}
\right] \in \mathbb F^{\frac{(\ell+3)\ell}{2} \times \frac{(\ell+3)\ell}{2}}
\]
with   $\displaystyle M_j=\left[
\begin{matrix}
0  & \cdots & 0 \\
& I_j &
\end{matrix}
\right] \in \mathbb F^{(j+1) \times j}$,
\[
A_{2}^{(\ell)}=\left[
\begin{matrix}
0 & \cdots & 0 & \vline & N_\ell & & & \\
\vdots &  & \vdots & \vline & & N_{\ell-1} & & \\
\vdots &  & \vdots & \vline   & & & \ddots & \\
0 & \cdots & 0 & \vline &  & & & N_2 \\
\hline \\
0 & \cdots &\cdots  & \cdots & \cdots & \cdots & \cdots & 0 \\
0 & \cdots &\cdots  & \cdots & \cdots & \cdots & \cdots & 0
\end{matrix}
\right] \in \mathbb F^{\frac{(\ell+3)\ell}{2} \times \frac{(\ell+3)\ell}{2}}
\]
with $\displaystyle
N_j=\left[
\begin{matrix}
1 & 0  & \cdots & 0 \\
 0&\cdots & \cdots &0\\
 \vdots &&&\vdots\\
 0&\cdots &\cdots&0
\end{matrix}
\right]  \in \mathbb F^{(j+1) \times j}$
and
\[
B_1^{(\ell)} = \left[\begin{matrix}
0 \\
\vdots \\
0 \\
\hline
0 \\
1
\end{matrix}\right]\in \mathbb F^{\frac{(\ell+3)\ell}{2}} \mbox{   and   }
B_2^{(\ell)} =\left[\begin{matrix}
0 \\
\vdots \\
0 \\
\hline
1 \\
0
\end{matrix}\right]\in \mathbb F^{\frac{(\ell+3)\ell}{2}}.
\]

\begin{algorithm}[H]
    \DontPrintSemicolon
    \caption{\cite{BiFoMa89}}\label{Alg 1}
    \KwIn{$G(z_1,z_2) \in \mathbb F[z_1,z_2]^{n \times k}$ with column degrees $\nu_1,\nu_2,\dots, \nu_k$.}
    \nl Let $D:=G(0,0)$ and $\hat G(z_1,z_2):=G(z_1,z_2)-G(0,0).$\;
    \nl Let 
    \begin{align*}
        A_1 &:={\rm diag}\{A_1^{(\nu_1)},A_1^{(\nu_2)},\dots,A_1^{(\nu_k)}\},\\
        A_2 &:={\rm diag}\{A_2^{(\nu_1)},A_2^{(\nu_2)},\dots,A_2^{(\nu_k)}\},\\
        B_1 &:={\rm diag}\{B_1^{(\nu_1)},B_1^{(\nu_2)},\dots,B_1^{(\nu_k)}\},\\
        B_2 &:={\rm diag}\{B_2^{(\nu_1)},B_2^{(\nu_2)},\dots,B_2^{(\nu_k)}\}.
    \end{align*}\;
    \nl Let $\mathcal{Z}^{(\ell)}$ be the matrix constituted by the elements $z_1^iz_2^j$ of degree smaller than or equal to $\ell$ and greater than $0$ in decreasing lexicographic order \footnotemark, i.e.,
    \[ \mathcal{Z}^{(\ell)} =\left[ \begin{array}{ccccccccccccc} z_2^{\ell} &  z_1z_2^{\ell-1} & \hspace{-1mm}\cdots \hspace{-1mm}& z_1^{\ell} &  z_2^{\ell-1} &  z_1z_2^{\ell-2} & \hspace{-1mm}\cdots\hspace{-0.5mm} & z_1^{\ell-1} & \hspace{-1mm}\cdots\hspace{-0.5mm} & z_2 & z_1 \end{array} \right]^T. \]\;
    \nl Define $$\Psi(z_1,z_2) ={\rm diag}\{\mathcal{Z}^{(\nu_1)},\mathcal{Z}^{(\nu_2)},\dots,\mathcal{Z}^{(\nu_k)}\}$$ and let $C \in \mathbb F^{n \times (\sum_{i=1}^k \frac{\nu_i(\nu_i+3)}{2})}$ be such that $\hat G(z_1,z_2)=C\Psi(z_1,z_2)$.
    In case $\nu_i=0$ for some $i$, the $i$-th columns in $B_{1}$, $B_{2}$ and $\Psi(z_1,z_2)$ are equal to zero, and the $i$-th diagonal blocks $A_{1}^{(\nu_i)}$ and $A_{2}^{(\nu_i)}$ are empty.\;
    \KwOut{$\Sigma\!=\!\left(A_1,A_2,B_1,B_2,C,D\right)$}
\end{algorithm}
\footnotetext{{Given $z_1^{i_1} z_2^{j_1}$ and $z_1^{i_2} z_2^{j_2}$, we say that $z_1^{i_1} z_2^{j_1} < z_1^{i_2} z_2^{j_2}$ if $i_1+j_1 < i_2+j_2$ or, in case $i_1 + j_1=i_2 +j_2$, if $i_1<i_2$}}

\begin{theorem}\label{th realization}
    Algorithm \ref{Alg 1} gives a realization of  $G(z_1,z_2) \in \mathbb F[z_1,z_2]^{n \times k}$. 
\end{theorem}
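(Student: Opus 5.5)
Because everything in Algorithm~\ref{Alg 1} is block diagonal, the plan is to prove one identity for a single block and assemble. We must show
\[
C(I-A_1z_1-A_2z_2)^{-1}(B_1z_1+B_2z_2)+D = G(z_1,z_2).
\]
Since $D=G(0,0)$ and $\hat G = C\Psi$, it suffices to show
\[
(I-A_1z_1-A_2z_2)^{-1}(B_1z_1+B_2z_2)=\Psi(z_1,z_2).
\]
As $A_1,A_2,B_1,B_2,\Psi$ are all block diagonal with the same block structure, this reduces to the single-column identity
\[
(I-A_1^{(\ell)}z_1-A_2^{(\ell)}z_2)^{-1}(B_1^{(\ell)}z_1+B_2^{(\ell)}z_2)=\mathcal Z^{(\ell)}
\]
for each $\ell\ge 1$. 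The case $\nu_i=0$ is immediate: there the block is empty, the $i$-th column of $\Psi$ is zero, and the $i$-th column of $\hat G$ is $G_i-G_i(0,0)=0$. The statement about codewords then follows directly, since with $\mathcal X_0=\mathbf 0$ the output is $T(z_1,z_2)u(z_1,z_2)=G(z_1,z_2)u(z_1,z_2)$, so the outputs are exactly the elements of $\mathrm{Im}\,G$.

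For the block identity I will avoid computing the inverse. The matrix $A_1^{(\ell)}z_1+A_2^{(\ell)}z_2$ is strictly block upper triangular, hence nilpotent, so $I-A_1^{(\ell)}z_1-A_2^{(\ell)}z_2$ is invertible over $\mathbb F[z_1,z_2]$ (its determinant is $1$). It is therefore enough to verify the equivalent identity
\[
\mathcal Z^{(\ell)} = (A_1^{(\ell)}z_1+A_2^{(\ell)}z_2)\,\mathcal Z^{(\ell)} + B_1^{(\ell)}z_1+B_2^{(\ell)}z_2 .
\]
To check this I split $\mathcal Z^{(\ell)}$ into its homogeneous parts: $\mathcal Z^{(\ell)}$ stacks the degree-$d$ monomial vectors $\zeta_d=[z_2^d,\ z_1z_2^{d-1},\dots,z_1^d]^T\in\mathbb F[z_1,z_2]^{d+1}$ for $d=\ell,\ell-1,\dots,1$. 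In the last block row, the final two rows of $A^{(\ell)}_1,A^{(\ell)}_2$ are zero and the $B$ terms contribute $[z_2,z_1]^T=\zeta_1$, which is correct. For a block row $d\ge 2$, the structure of $A_i^{(\ell)}$ says that this row receives $(M_d z_1+N_d z_2)\zeta_{d-1}$ and nothing from $B$. So the whole verification reduces to the key computation $(M_dz_1+N_dz_2)\zeta_{d-1}=\zeta_d$. Here $M_d\zeta_{d-1}$ is $\zeta_{d-1}$ shifted down one position with a zero on top; multiplying it by $z_1$ gives entries $z_1^{m+1}z_2^{d-1-m}$ in positions $2,\dots,d+1$. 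The term $N_d\zeta_{d-1}z_2$ supplies $z_2^d$ in the first position. Together these give exactly $\zeta_d$.

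The main obstacle is index bookkeeping. I need to confirm that the block-column placement of $M_j,N_j$ in $A_i^{(\ell)}$ (a zero block column of width $\ell+1$, then diagonal blocks $M_\ell,\dots,M_2$) matches the ordering of $\mathcal Z^{(\ell)}$ and the stated footnote order. Concretely, block row $d$ must pick out $\zeta_{d-1}$, and the dimensions must agree: $(d+1)\times d$ blocks and total size $\sum_{d=1}^\ell (d+1)=\ell(\ell+3)/2$. I would check this first on $\ell=2$ explicitly before writing the general argument.
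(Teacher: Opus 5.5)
Your proposal is correct and follows the paper's route: the paper's proof consists only of asserting $\Psi(z_1,z_2)=(I-A_1z_1-A_2z_2)^{-1}(B_1z_1+B_2z_2)$, and your block-diagonal reduction plus the check $(M_dz_1+N_dz_2)\zeta_{d-1}=\zeta_d$ (zero block column of width $\ell+1$, with $B_1z_1+B_2z_2$ supplying $\zeta_1=[z_2,z_1]^T$) is exactly the verification of that identity. For your bookkeeping, match the blocks against the displayed $\mathcal Z^{(\ell)}$ rather than the footnote: read in decreasing order, the footnote's tie-break $i_1<i_2$ would list each homogeneous block as $z_1^d,\dots,z_2^d$, which is the reverse of the display and is incompatible with the given $M_d$, $N_d$, $B_1$, $B_2$.
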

\begin{proof}
The theorem follows from the fact that 
$$
\Psi(z_1,z_2)=(I-A_1z_1-A_2z_2)^{-1}(B_1z_1+B_2z_2).
$$
\end{proof}

Let $G(z_1,z_2) \in \mathbb F[z_1,z_2]^{n \times k}$ and let $\Sigma\!=\!\left(A_1,A_2,B_1,B_2,C,D\right)$ be the realization of $G(z_1,z_2)$ of dimension $s$ produced by Algorithm 1. It is easy to see that\\ det$\left[\begin{array}{c} I-A_1z_1-A_2z_2 \end{array}\right]=1$, which means that $ I-A_1z_1-A_2z_2 $ is unimodular. Thus, a polynomial input $u(z_1,z_2) \in \mathbb F[z_1,z_2]^k$ produces the  state $x(z_1,z_2)=(I-A_1z_1-A_2z_2)^{-1}(B_1z_1+B_2z_2)u(z_1,z_2)$ which is also polynomial, i.e., which belongs to $\mathbb F[z_1,z_2]^s$. Thus, if $G(z_1,z_2) \in \mathbb F[z_1,z_2]^{n \times k}$ is an encoder of a 2D convolutional code and $\Sigma\!=\!\left(A_1,A_2,B_1,B_2,C,D\right)$ is the realization of $G(z_1,z_2)$ given by Algorithm 1, we have that the codewords of $\cal C$ correspond to the outputs of the system with corresponding inputs and states having finite support in $\mathbb{N}_{0}^2$.

Another way of seeing the evolution of the trajectories of a 2D system is by considering its global states, inputs and outputs. The $i$-th global input is defined as
$$
\begin{array}{cccc}
{\cal U}_i: & \mathbb Z & \rightarrow & \mathbb F^k\\
& r & \mapsto & u(r,i-r)
\end{array}
$$
and the $i$-th global output as
$$
\begin{array}{cccc}
{\cal W}_i: & \mathbb Z & \rightarrow & \mathbb F^n\\
& r & \mapsto & w(r,i-r)
\end{array}
$$
constituted by the sequences of local inputs and outputs with support in ${\cal S}_i$ \cite{FoMa86}.
We can represent the $i$-th global state, input and output states as
$$
X_i(\xi)=\sum_{j \in \mathbb Z} x(i-j,j)\xi^j,
$$
$$
U_i(\xi)=\sum_{j \in \mathbb Z} u(i-j,j)\xi^j,
$$
$$
W_i(\xi)=\sum_{j \in \mathbb Z} w(i-j,j)\xi^j,
$$
and the sequences of (local) states, inputs and outputs, $x(z_1,z_2)$, $u(z_1,z_2)$ and $w(z_1,z_2)$, as
$$
\sum_{i \in \mathbb Z} X_i(\xi) \eta^i,
$$
$$
\sum_{i \in \mathbb Z} U_i(\xi) \eta^i,
$$
$$
\sum_{i \in \mathbb Z} W_i(\xi) \eta^i,
$$
which corresponds to the change of variables $\xi=z_1^{-1}z_2$ and $\eta=z_1$.

The global updating equations of a 2D system $\Sigma\!=\!\left(A_1,A_2,B_1,B_2,C,D\right)$ are
\begin{eqnarray*}
X_{i+1}(\xi) & = & (A_1+A_2 \xi)X_{i}(\xi) + (B_1+B_2 \xi) U_i(\xi) \\
W_i(\xi) & = & C X_i(\xi) + D U_i(\xi) .
\end{eqnarray*}

Let $\Sigma$ be a realization of dimension $s$ obtained from Algorithm 1 of an encoder $G(z_1,z_2) \in \mathbb F[z_1,z_2]^{n \times k}$ of a 2D convolutional code. Note that the i-th global states, inputs and outputs corresponding to trajectories whose output are the codewords of the code, are polynomials of degree smaller than or equal to $i$ with coefficients in $\mathbb F^{s}$, $\mathbb F^{k}$ and $\mathbb F^{n}$, respectively. Let us denote such sets by $\mathbb F[\xi]^s_{\leq i}$, $\mathbb F[\xi]^k_{\leq i}$ and $\mathbb F[\xi]^n_{\leq i}$, respectively.

\section{Two-dimensional Viterbi algorithm}\label{decoding}

Let $\cal C$ be an $(n,k)$ 2D convolutional code and let $$r(z_1,z_2) = \displaystyle \sum_{(i,j) \in \mathbb N_0^2} r_{ij} z_1 ^i z_2 ^j \in \mathbb F[z_1,z_2]^n$$ be a received word after channel transmission. In this section, we present a decoding procedure to obtain the closest codeword of $\cal C$ to $r(z_1,z_2)$, i.e., the codeword $w(z_1,z_2)= \displaystyle \sum_{(i,j) \in \mathbb N_0^2} w_{ij} z_1 ^i z_2 ^j \in \cal C$ with minimal distance to $r(z_1,z_2)$, where the distance between $w(z_1,z_2)$ and $r(z_1,z_2)$ is defined as $$d(w(z_1,z_2),r(z_1,z_2))=\displaystyle \sum_{(i,j) \in \mathbb N_0^2}d(w_{ij},r_{ij})$$ and $d(w_{ij},r_{ij})$ is the Hamming distance between $w_{ij}$ and $r_{ij}$, which is the number of positions in which $w_{ij}$ and $r_{ij}$ differ, or equivalently, the number of nonzero entries of the vector $w_{ij} - r_{ij}$ (the distance between two words in one indeterminate 
is defined in the same way). For this scenario we will adapt the Viterbi algorithm \cite{Viterbi1967}, which is the most popular error-correction decoding algorithm for 1D convolutional codes. 

The Viterbi algorithm works as follows. When a word transmitted through the communication channel is received, the Viterbi algorithm considers all codewords, until a certain length, and compares the received word to each of them.  Then it selects a codeword which has minimum distance to the received word. 

Let $\cal C$ be a 2D  convolutional code, $G(z_1,z_2) \in \mathbb F[z_1,z_2]^{n \times k}$ an encoder of $\cal C$ and $\Sigma\!=\!\left(A_1,A_2,B_1,B_2,C,D\right)$ 
a realization of dimension $s$, for example the one given by Algorithm 1, with global updating equations
\begin{eqnarray*}
X_{i+1}(\xi) & = & (A_1+A_2 \xi)X_{i}(\xi) + (B_1+B_2 \xi) U_i(\xi)\\
W_i(\xi) & = & C X_i(\xi) + D U_i(\xi) .
\end{eqnarray*}
The evolution of the trajectories of the system (that will generate the codewords of the code) can be represented by a trellis diagram, as follows: 
\begin{enumerate}
\item for each $i \in \mathbb{N}_0$ consider all the elements of $\mathbb F[\xi]^s_{\leq i}$ as the nodes at step $i$; 
\item for each $i \in \mathbb N_0$, the nodes $X_i(\xi) \in \mathbb F[\xi]^s_{\leq i}$ and $X_{i+1}(\xi) \in \mathbb F[\xi]^s_{\leq i+1}$ are joined by an arc (starting in $X_i(\xi)$ and ending in $X_{i+1}(\xi)$), if there exists $U_i(\xi) \in \mathbb F[\xi]^k_{\leq i}$ such that 
$$
X_{i+1}(\xi) = (A_1+A_2 \xi)X_i(\xi) + (B_1+B_2 \xi) U_i(\xi).
$$
The arc $(X_i(\xi),X_{i+1}(\xi))$ is labeled with $(U_i(\xi),W_i(\xi))$, where
$$
W_i(\xi)  =  C X_i(\xi) + D U_i(\xi)
$$
is an element of $\mathbb F[\xi]^n_{\leq i}$ and we say that $X_i(\xi)$ is a predecessor of $X_{i+1}(\xi)$.
\end{enumerate}

The codewords of $\cal C$ correspond to the paths in the trellis diagram that start at $X_0(\xi)=\bf{0}$ at step $0$ and end at the zero state $\bf{0} \in \mathbb F[\xi]^s_{\leq i}$ at step $i$, for some $i \in \mathbb N_0$. If 
$$(U_0(\xi),W_0(\xi)), (U_1(\xi),W_1(\xi)), \dots, (U_i(\xi),W_i(\xi))
$$ is the sequence of the labels of such a path, the corresponding codeword is equal to $w(z_1,z_2)=W_0(\xi) + W_1(\xi)\eta + \cdots + W_i(\xi)\eta^{i}$,
where $$\xi=z_1^{-1}z_2 \text{ and } \eta=z_1.$$
Given a received word $r(z_1,z_2) \in \mathbb F[z_1,z_2]^n$ and considering the same change of variables $\xi=z_1^{-1}z_2$ and $\eta=z_1$, we can write $r(z_1,z_2)$ as $\sum_{i \in \mathbb N_0}R_{i}(\xi)\eta^i$, with $R_{i}(\xi) \in \mathbb F[\xi]_{\leq i}$. To find the closest codeword to $\sum_{i \in \mathbb N_0}R_{i}(\xi)\eta^i$, the Viterbi algorithm considers at each step $i$ the restriction of the received word to the separation sets ${\cal S}_0$ to ${\cal S}_i$ (i.e., the word $\sum_{j=0}^i R_j(\xi)\eta^{j}$). Then, for each state $X_i(\xi)\in \mathbb F[\xi]^s_{\leq i}$, it takes all the paths from the zero state $X_0(\xi)=\bf{0} \in \mathbb F[\xi]_{\leq 0}$ to $X_i(\xi)$. For each of these paths with nodes $X_0(\xi), X_1(\xi), \dots, X_i(\xi)$, the algorithm considers the corresponding sequence of labels of the arcs $(U_0(\xi),W_0(\xi)), (U_1(\xi),W_1(\xi)), \dots, (U_i(\xi),W_i(\xi))$ and computes the minimal distance between $\sum_{j=0}^i W_j(\xi)\eta^{j}$ and $\sum_{j=0}^i R_j(\xi)\eta^{j}$. It sets $\bar d$ equal to the minimum of these distances and $\bar{sp}$ equal to the list of the paths with corresponding distance equal to $\bar d$ (the survivor paths). Then it assigns the label $(\bar d, \bar{sp})$ to $X_{i}(\xi)$.

This leads to the following Algorithm \ref{alg 2D viterbi}. A detailed example for this algorithm can be found in the appendix.

\begin{algorithm}[H]
    \DontPrintSemicolon
    \SetAlgoNoLine
    \SetAlgoNoEnd
    \caption{2D Viterbi algorithm}\label{alg 2D viterbi}
    \KwIn{ 
    A realization $\Sigma\!=\!\left(A_1,A_2,B_1,B_2,C,D\right)$ of dimension $s$ of an encoder $G(z_1,z_2)$ of $\cal C$ and the received word $R(\xi, \eta)=\sum_{i =0}^{N} R_i(\xi)\eta^{i}$, with $R_{N}(\xi) \neq 0$ where we call $N$ the global degree of $R(\xi,\eta)$.
    }
    \nlset{} \textbf{Initialization:} Set the initial node $X_0(\xi)=\mathbf{0}$ and assign to it the label $(d=0, sp= \emptyset)$. For all $i \in \mathbb N$, assign to all $X_i(\xi)\in\mathbb F[\xi]_{\leq i}^s$ the label $(\infty,\emptyset)$. Set $i=0$, $d_{min}=\infty$ and $sp_{min}= \emptyset$.
    \BlankLine
    \nl\label{1} \ForEach{ node $X_{i+1}(\xi)\in\mathbb F[\xi]_{\leq i+1}^s$}{
    \BlankLine
        \nlset{} \hspace{0.6cm }\nlset{1.1}\label{1.1} \hspace{-0.55cm}\ForEach{predecessor $X_i(\xi)\in\mathbb F[\xi]_{\leq i}^s$ with label $(d,sp)$}{compute the sum $d+d(W_i(\xi),R_i(\xi))$, where $(U_i(\xi),W_i(\xi))$ is the label of the arc $(X_i(\xi),X_{i+1}(\xi))$ in the trellis}
    \BlankLine
        \nlset{} \hspace{0.6cm }\nlset{1.2}\hspace{-0.55cm}Determine the minimum of the sums computed in \ref{1.1} and denote it by $\bar d$, and let $L$ be the list of the nodes $X_i(\xi)\in\mathbb F[\xi]_{\leq i}^s$ such that $\bar d=d+d(W_i(\xi),R_i(\xi))$, where $X_i(\xi)$ has label $(d,sp)$ and $(U_i(\xi),W_i(\xi))$ is the label of the arc $(X_i(\xi),X_{i+1}(\xi))$ in the state-transition diagram.\;
    \BlankLine
        \nlset{} \hspace{0.6cm }\nlset{1.3} \hspace{-0.55cm}Set $\bar{sp}=\emptyset$\\
    \BlankLine
        \nlset{} \hspace{0.6cm }\nlset{1.4} \hspace{-0.55cm}\ForEach{$X_i(\xi)$ in $L$ with label $(d,sp)$}{
        \uIf{$i=0$}{ set $\bar{sp}=\{\mathbf{0}\}$ }
        \uElse{set $\bar{sp}=\cup_{\ell\in sp} (\ell,X_i(\xi))\cup \bar{sp}$, where $(\ell,X_i(\xi))$ is the concatenation of the two vectors $\ell$ and $X_i(\xi)$}
        } 
    \BlankLine
        \nlset{} \hspace{0.6cm }\nlset{1.5} \hspace{-0.55cm}Assign to $X_{i+1}(\xi)$ the label $(\bar {d}, \bar{sp})$ ($\bar{sp}$ is the list of paths from $X_0(\xi)=0$ to $X_{i+1}(\xi)$, whose corresponding truncated codewords $\sum_{j\leq i } W_j(\xi)\eta^{j}$ have minimal distance to $\sum_{j\leq i} R_j(\xi)\eta^{j}$)\\
    \BlankLine
        \nlset{} \hspace{0.6cm }\nlset{1.6} \hspace{-0.55cm}\uIf{$i \geq N$ and $X_{i+1}(\xi)=\bf{0}$ and $\bar{d} = d_{min}$}{
        set $sp_{min}=sp_{min}\cup\bar{sp}$
        }
        \hspace{0.2cm}\uIf{$i \geq N$ and $X_{i+1}(\xi)=\bf{0}$ and $\bar{d} < d_{min}$}{set $d_{min}= \bar{d}$ and $sp_{min}=\bar{sp}$}
    } 
    \BlankLine
    \nl \uIf{all the nodes $X_{i+1}(\xi) \neq \bf{0}$ with label $(\bar {d}, \bar{sp})$ are such that $\bar{d}>d_{min}$}{
        \KwOut{the list $c_{min}$ of codewords corresponding to the paths in $sp_{min}$ and the corresponding messages}
        }
        \uElse{
        set $i=i+1$ and go to Step \ref{1}
        }

\end{algorithm}

The above algorithm returns $c_{min}$, which is the list of the codewords of $\cal C$ with minimal distance to the received word among all the codewords of the code. Since one cannot predict when the algorithm stops, it is usual to not consider all the codewords but only the ones with global degree at most $N$. The next algorithm is a modified version of Algorithm 2 that considers this restriction. The number of iterations will be $N$, from $i=0$ to $i=N-1$.

\begin{algorithm}[H]
    \DontPrintSemicolon
    \SetAlgoNoLine
    \SetAlgoNoEnd
    \caption{2D modified Viterbi algorithm}\label{alg mod 2D viterbi}
    \KwIn{ 
    A realization $\Sigma\!=\!\left(A_1,A_2,B_1,B_2,C,D\right)$ of dimension $s$ of an encoder $G(z_1,z_2)$ of $\cal C$ and the received word $R(\xi, \eta)=\sum_{i =0}^{N} R_i(\xi)\eta^{i}$, with $R_{N}(\xi) \neq 0$ where we call $N$ the global degree of $R(\xi,\eta)$.
    }
    \nlset{} \textbf{Initialization:} Set the initial node $X_0(\xi)=\mathbf{0}$ and assign to it the label $(d=0, sp= \emptyset)$. 
    For all $1\leq i \leq N$ assign to all $X_i(\xi)\in\mathbb F[\xi]_{\leq i}^s$ the label $(\infty,\emptyset)$.
        \BlankLine
    \nl \For{$i=0,\dots,N-1$}{\ForEach{$X_{i+1}(\xi)\in\mathbb F[\xi]_{\leq i+1}^s$}{
    \BlankLine
        \nlset{} \hspace{0.6cm }\nlset{1.1}\label{1.1 mod} \hspace{-0.55cm}\ForEach{predecessor $X_i(\xi)\in\mathbb F[\xi]_{\leq i}^s$ with label $(d,sp)$}{compute the sum $d+d(W_i(\xi),R_i(\xi))$, where $(U_i(\xi),W_i(\xi))$ is the label of the arc $(X_i(\xi),X_{i+1}(\xi))$ in the trellis}
    \BlankLine
        \nlset{} \hspace{0.6cm }\nlset{1.2}\hspace{-0.55cm}Determine the minimum of the sums computed in \ref{1.1 mod} and denote it by $\bar d$, and let $L$ be the list of the nodes $X_i(\xi)\in\mathbb F[\xi]_{\leq i}^s$ such that $\bar d=d+d(W_i(\xi),R_i(\xi))$, where $X_i(\xi)$ has label $(d,sp)$ and $(U_i(\xi),W_i(\xi))$ is the label of the arc $(X_i(\xi),X_{i+1}(\xi))$ in the state-transition diagram.\;
    \BlankLine
        \nlset{} \hspace{0.6cm }\nlset{1.3}\hspace{-0.55cm}Set $\bar{sp}=\emptyset$\\
    \BlankLine
        \nlset{} \hspace{0.6cm }\nlset{1.4}\hspace{-0.55cm}\ForEach{$X_i(\xi)$ in $L$ with label $(d,sp)$}{
        \uIf{$i=0$}{ set $\bar{sp}=\{\bf{0}\}$}
        \uElse{set $\bar{sp}=\cup_{\ell\in sp} (\ell,X_i(\xi))\cup \bar{sp}$, where $(\ell,X_i(\xi))$ is the concatenation of the two vectors $\ell$ and $X_i(\xi)$}
        } 
    \BlankLine
        \nlset{} \hspace{0.6cm }\nlset{1.5} \hspace{-0.55cm}Assign to $X_{i+1}(\xi)$ the label $(\bar {d}, \bar{sp})$ ($\bar{sp}$ is the list of paths from $X_0(\xi)=\bf{0}$ to $X_{i+1}(\xi)$, whose corresponding truncated codewords $\sum_{j\leq i } W_j(\xi)\eta^{j}$ have minimal distance to $\sum_{j\leq i} R_j(\xi)\eta^{j}$)\\
    } }
    \BlankLine
    \KwOut{the list $c_{min}$ of codewords corresponding to the paths in $\bar{sp}$ of the state $X_N(\xi)=\bf{0}$ and the corresponding messages.}
\end{algorithm}

\section{Complexity analysis and methods to obtain realizations of small dimension}

In this section we are going to analyze the complexity of the 2D modified Viterbi Algorithm; see Algorithm 3. As the following theorem shows this complexity is strongly related with the dimension of the linear system representation of the code that is used in the decoding algorithm.
Therefore, in this section, we also discuss the problem of finding realizations of smallest possible dimension, which we illustrate with several examples.

\begin{theorem}
For an $(n,k)$ 2D convolutional code and a received word of global degree $N$,
the complexity of Algorithm 3 is $$O(nN\cdot q^{(s+k)(N+1)-1}),$$
when using a system representation of dimension $s$.
\end{theorem}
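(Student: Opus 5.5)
We count the elementary operations of Algorithm 3 by adding up the work done at each step $i=0,\dots,N-1$. At step $i$ the outer loop runs over all nodes $X_{i+1}(\xi)\in\mathbb F[\xi]^s_{\leq i+1}$. For each such node, the inner loop 1.1 runs over all predecessors $X_i(\xi)$ together with the inputs $U_i(\xi)$ that realize the arc. So instead of counting nodes and predecessors separately, we count all pairs $(X_i(\xi),U_i(\xi))\in\mathbb F[\xi]^s_{\leq i}\times\mathbb F[\xi]^k_{\leq i}$. Each pair determines exactly one arc through the global updating equation
$$X_{i+1}(\xi)=(A_1+A_2\xi)X_i(\xi)+(B_1+B_2\xi)U_i(\xi).$$
There are $q^{(s+k)(i+1)}$ such pairs, where $q=|\mathbb F|$, because a polynomial vector of degree at most $i$ with $s$ (resp. $k$) components has $s(i+1)$ (resp. $k(i+1)$) free coefficients. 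Summing over all target nodes at step $i$, the total number of arc evaluations is therefore at most $q^{(s+k)(i+1)}$.

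Next we bound the cost of one arc evaluation. We must compute $W_i(\xi)=CX_i(\xi)+DU_i(\xi)$ and the Hamming distance $d(W_i(\xi),R_i(\xi))$, which is a sum over the $i+1$ coefficients of vectors in $\mathbb F^n$. We treat the matrix–vector products with the fixed realization matrices as contributing a factor proportional to $n$ per coefficient; more precisely, the distance comparison costs $O(n(i+1))$. The additions, minima and survivor-list updates in steps 1.2–1.5 are at most proportional to the number of incoming arcs, so they do not change the order. Hence step $i$ costs $O(n(i+1)q^{(s+k)(i+1)})$. We could also precompute the outputs incrementally; this is not needed for the stated bound.

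Summing over $i$ gives
$$\sum_{i=0}^{N-1} n(i+1)q^{(s+k)(i+1)}\leq nN\sum_{j=1}^{N}q^{(s+k)j}.$$
This geometric sum is dominated by its last term: $\sum_{j\le N} q^{(s+k)j}\le q^{(s+k)N}\cdot\frac{q^{s+k}}{q^{s+k}-1}\le 2q^{(s+k)N}$. This yields $O(nN q^{(s+k)N})$, which is at most the claimed $O(nN\,q^{(s+k)(N+1)-1})$ because $(s+k)N\le (s+k)(N+1)-1$ whenever $s+k\ge1$.

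The main obstacle is bookkeeping rather than mathematics. We have to make sure the enumeration of predecessors is charged as pairs (state, input), and not as (node, predecessor) pairs, which would square the count. We also have to fix a precise convention for the exponent, i.e. whether the count at the last layer is taken over degree $\le N$ or $\le N-1$ polynomials. This convention is what produces the stated exponent $(s+k)(N+1)-1$. If the intended count uses nodes of degree up to $N$ in the final layer, I would redo the count with $q^{s(N+1)}$ target nodes times $q^{k N}$-type input choices and check that the exponent matches. Either way, the result stays within the stated $O$-bound.
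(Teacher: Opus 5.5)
Your proof is correct and follows the paper's argument. It is the same count: $q^{(s+k)(i+1)}$ state--input pairs at layer $i$, each charged $O(n(i+1))$ for the distance, summed over $i=0,\dots,N-1$. The only difference is the final estimate. The paper bounds $\sum_{i=0}^{N-1}(q^{s+k})^i$ by $\frac{q^{(s+k)N}-1}{q-1}$, with denominator $q-1$ rather than $q^{s+k}-1$. That loose denominator, combined with the prefactor $nNq^{s+k}$ and $q/(q-1)\le 2$, is what produces the exponent $(s+k)(N+1)-1$. It does not come from a degree convention in the last layer, as you speculate. Your exact evaluation of the geometric series gives the sharper $O(nNq^{(s+k)N})$, which implies the statement.

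Be careful, though, about presenting the sharper bound as the cost of Algorithm~3 as written.
\begin{itemize}
\item You claim that steps 1.2--1.5 cost at most proportionally to the number of incoming arcs. This ignores that the algorithm initializes and loops over every $X_{i+1}(\xi)\in\mathbb F[\xi]^s_{\le i+1}$, reached or not, at $\Omega(1)$ each. That adds $\Theta(q^{s(N+1)})$, which is not $O(nNq^{(s+k)N})$ as $s\to\infty$ with $n,N,k,q$ fixed, the regime the paper has in mind. It is, however, at most $q^{(s+k)(N+1)-1}$, since $k(N+1)\ge 1$.
\item Forming $X_{i+1}(\xi)$ and $W_i(\xi)$ costs $O((s+n)(s+k))$ per coefficient, not $O(n)$. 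It is the slack $q^{s+k-1}$ between the stated bound and yours that absorbs this factor uniformly in $s$.
\end{itemize}
The paper's proof skips these costs too, but the stated exponent covers them and yours does not. So either fix an implementation that touches only reached nodes, or add these terms and conclude directly with the stated bound.
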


\begin{proof}
   For each $i \in \{0,1,\dots,N-1\}$, Algorithm 3 considers the $q^{(i+1)s}$ nodes $X_{i}(\xi) \in \mathbb F[\xi]^s_{\leq i}$. For each of these nodes the algorithm considers $q^{(i+1)k}$ outgoing branches and for each of the predecessors it computes the distance between the $R_i(\xi) \in \mathbb F[\xi]^n_{\leq i}$ and the corresponding $W_i(\xi) \in \mathbb F[\xi]^n_{\leq i}$. Hence, the overall complexity is
   \begin{eqnarray*}
        O\left(\sum_{i=0}^{N-1} q^{(i+1)s} q^{(i+1)k} n(i+1) \right) & = & O\left(n N q^{s+k}\, \sum_{i=0}^{N-1} (q^{s+k})^i \right)
        \\
        & = & O\left(n N q^{s+k} \,\, \frac{q^{(s+k)N}-1}{q-1} \right).
   \end{eqnarray*}
  \end{proof}
Note that $N,n,k,q$ are fixed. However, $s$ depends on the dimension of the used system representation of the code. Therefore, a minimal system representation of the code, i.e., a system representation of minimal dimension among all the system representations of the code, would be desirable. Obtaining such a minimal system representation can be a difficult task. System representations of a code are realizations of encoders of the code. An encoder admits many realizations of different dimensions. The problem of obtaining a realization of minimal dimension among all realizations of an encoder is still an open problem in the 2D case.

A realization of an encoder of a 1D convolutional is minimal if and only if it is reachable and observable. The following definition considers such properties for 2D linear systems.

\begin{definition} \cite{FoMa86} \label{reach_observ2D}
Let $\Sigma=(A_1,A_2,B_1,B_2,C,D)$ be a 2D linear system with dimension
$s$. Let  
\begin{align*}
A_1 \, ^r\!\triangle^{t}  A_2 &=0, \text{ when either $r$ or $t$ is negative},\\
 A_1 \, ^r\!\triangle^{0}  A_2 &=A_1^r, \;\; A_1 \, ^0\!\triangle^{t}  A_2 =A_2^t,  \text{ for $r,t\ge0$},\\
 A_1 \, ^r\!\triangle^{t}   A_2& =A_1 \left(A_1\,^{r-1}\!\triangle^{t}  A_2\right)+ A_2\left( A_1\, ^r\!\triangle^{t-1}  A_2\right), \text{ for $r,t\ge1$}.
\end{align*}

\begin{enumerate}
 \item[\emph{(a)}]
     $\Sigma$ is called locally reachable if the reachability matrix
 \begin{align*}
  \mathcal{R}
    & = \left[
          R_1 \; \; R_2 \; \; R_3 \; \; \cdots
        \right] \mbox{ is full row rank},
\end{align*}
 where $R_k$ is defined as the block matrix including all columns given by
 $$\left(A_1 \, ^{i-1}\!\triangle^{j}   A_2\right)B_1 + \left(A_1 \, ^i\!\triangle^{j-1}   A_2\right)B_2$$ with $i+j=k$, for $i,j\ge0$ and

\item[\emph{(b)}] $\Sigma$ is called locally observable if the observability matrix
 \begin{align*}
  \mathcal{O}
    & = \left[\begin{array}{c}  \mathcal{O}_1 \\  \mathcal{O}_2 \\  \mathcal{O}_3 \\ \vdots
        \end{array}
        \right] \mbox{ is full row rank},
 \end{align*}
 where ${\cal O}_k$ is defined as the block matrix including all rows given by
 $$ C (A_1 \, ^{i}\!\triangle^{j}   A_2)$$ with $i+j=k$, for $i,j\ge0$.
\end{enumerate}
\end{definition}
However, in contrast to the 1D case, (local) reachability and (local) observability are not sufficient conditions for minimality of 2D linear systems. A characterization of minimal realizations of an encoder of a 2D convolutional code has been an open problem for many years.

In certain cases, a realization of lower dimension can be obtained from another realization satisfying certain conditions.
Given a realization $\Sigma=(A_1,A_2,B_1,B_2,C,D)$ of an encoder $G(z_1,z_2)$ of dimension $s$ and $T$ an invertible $s \times s$ matrix it follows that $$\tilde \Sigma=(T^{-1}A_1T,T^{-1}A_2T,T^{-1}B_1,T^{-1}B_2,CT,D)$$ is also a realization of $G(z_1,z_2)$ of the same dimension. $\Sigma$ and $\tilde \Sigma$ are said to be algebraically equivalent. In the particular case that
$\Sigma=(A_1,A_2,B_1,B_2,C,D)$ is the realization of $G(z_1,z_2)$ given by Algorithm \ref{Alg 1}, the corresponding reachability matrix has the identity matrix as a submatrix, which means that $\Sigma$ is locally reachable. Moreover, since $A_2A_1=0$, we have that its observability matrix is equal to
$$
\mathcal{O}=\left[ \begin{array}{c} CA_1^iA_2^j  \end{array} \right]_{i,j \in \mathbb N_0},
$$
up to row permutation. If $\rank\, \mathcal{O} = s$, then $\Sigma$ is also locally observable. If $\rank \, \mathcal{O} = r < s$, then $\dim \ker \mathcal{O}=s-r$. Let $(w_{r+1}, \dots, w_s)$ be a basis of $\ker \mathcal{O}$ and $T=\left[ \begin{array}{cccccc} w_1 & \cdots & w_r & w_{r+1}&  \dots & w_s  \end{array} \right] \in \mathbb F^{s \times s}$ be an invertible matrix, for some $w_1, \dots, w_r \in \mathbb F^s$. Note that $CT=\left[ \begin{array}{cc} C_1 & 0 \end{array} \right]$, where $C_1 \in \mathbb F^{n \times r}$. Let $x \in \ker \mathcal{O}$, i.e., 
\begin{equation}\label{kerO}
CA_1^iA_2^j x=0 \mbox{ for all } i,j \in \mathbb N_0.
\end{equation}
Then, $CA_1^iA_2^j (A_1 x)=0$ if $j \geq 1$ because $A_2A_1=0$, and also for $j=0$ by (\ref{kerO}). Thus, ${\cal O}(A_1x)=0$ and therefore, $\ker \cal{O}$ is $A_1$-invariant. On the other hand, (\ref{kerO}) implies that $CA_1^iA_2^{j+1}=0$ for all $i,j \in \mathbb N_0$ and therefore, ${\cal O}(A_2x)=0$ which means that $\ker \cal O$ is also $A_2$-invariant. Hence, 
$$
T^{-1}A_1T=\left[\begin{array}{cc} A_{11} & 0 \\ A_{13} & A_{14}\end{array}\right]
$$
and
$$
T^{-1}A_2T=\left[\begin{array}{cc} A_{21} & 0 \\ A_{23} & A_{24}\end{array}\right]
$$
where $A_{11},A_{21} \in \mathbb F^{r \times r}$, $A_{13},A_{23} \in \mathbb F^{(s-r) \times s}$ and $A_{14},A_{24} \in \mathbb F^{(s-r) \times (s-r)}$. Thus, we have that $$(T^{-1}A_1T, T^{-1}A_2T,T^{-1}B_1,T^{-1}B_2,CT,D)$$
is also a realization of $G(z_1,z_2)$ and therefore so is $(A_{11},A_{21},B_{11},B_{21},C_1,D)$, where $T^{-1}B_1=\left[\begin{array}{c} B_{11} \\ B_{12}\end{array}\right]$ and $T^{-1}B_2=\left[\begin{array}{c} B_{21} \\ B_{22}\end{array}\right]$, with $B_{11}, B_{21} \in \mathbb F^{r \times k}$ and $B_{12}, B_{22} \in \mathbb F^{(s-r) \times k}$ since 
\begin{eqnarray*}
    G(z_1,z_2) & = & \left[\begin{array}{cc} C_1 & 0 \end{array}\right]\left(\left[\begin{array}{cc} I_r & \\ & I_{s-r} \end{array}\right]-\left[\begin{array}{cc} A_{11} & 0 \\ A_{13} & A_{14}\end{array}\right]z_1-\left[\begin{array}{cc} A_{21} & 0 \\ A_{23} & A_{24}\end{array}\right]z_2\right)^{-1} \times \\
    & & \;\;\;\;\;
    \left(\left[\begin{array}{c} B_{11} \\ B_{12}\end{array}\right]z_1+\left[\begin{array}{c} B_{21} \\ B_{22}\end{array}\right]z_2\right) + D \\
    & = & C_1(I_r-A_{11} z_1 -A_{21} z_2)^{-1}\left(B_{11} z_1+B_{21} z_2\right) + D
\end{eqnarray*}
Moreover, $(A_{11},A_{21},B_{11},B_{21},C_1,D)$ is locally reachable since its reachability matrix is the submatrix of the matrix $T^{-1}{\cal R}$ consisting of the first $r$ rows and therefore it is full row rank. Furthermore, ${\cal O} T=\left[\begin{array}{cc} {\cal O}_1 & 0\end{array}\right]$, where ${\cal O}_1$ is the observability matrix of $(A_{11},A_{21},B_{11},B_{21},C_1,D)$. Since the rank of $\cal O$ is $r$, also the rank of ${\cal O}_1$ is $r$ and therefore, the system $(A_{11},A_{21},B_{11},B_{21},C_1,D)$ is locally observable.

Thus, given a realization of an encoder $G(z_1,z_2) \in \mathbb F^{n \times k}[z_1,z_2]$ obtained by Algorithm \ref{Alg 1} that is not locally observable, one can construct another realization of $G(z_1,z_2)$ with smaller dimension that is both locally observable and locally reachable. For clarity, this procedure is presented in the following algorithm.

\begin{algorithm}[H]
    \DontPrintSemicolon
    \caption{}\label{Alg 4}
    \KwIn{$\Sigma=(A_1,A_2,B_1,B_2,C,D)$  realization of dimension $s$ of $G(z_1,z_2)  \in \mathbb F[z_1,z_2]^{n \times k}$ given by Algorithm \ref{Alg 1} and $\cal O$ the observability matrix of $\Sigma$ of rank $r$ (with $r<s$).}
    \nl Let $T=\left[ \begin{array}{cccccc} w_1 & \cdots & w_r & w_{r+1}&  \dots & w_s  \end{array} \right] \in \mathbb F^{s \times s}$ be an invertible matrix such that $(w_{r+1}, \dots, w_s)$ is a basis of $\ker \mathcal{O}$.\;
    \nl Let $$ A_{11}:=\left[ \begin{array}{cc} I_r & 0 \end{array} \right]T^{-1}A_1 T\left[ \begin{array}{c} I_r \\ 0 \end{array}\right],$$ $$ A_{21}:=\left[ \begin{array}{cc} I_r & 0 \end{array} \right]T^{-1}A_2 T\left[ \begin{array}{c} I_r \\ 0 \end{array}\right],$$ $$ B_{11}:=\left[ \begin{array}{cc} I_r & 0 \end{array} \right]T^{-1}B_1,$$ $$ B_{21}:=\left[ \begin{array}{cc} I_r & 0 \end{array} \right]T^{-1}B_2, $$ $$ C_1:=C T\left[ \begin{array}{c} I_r \\ 0 \end{array}\right].$$\;
    \KwOut{$\Sigma\!=\!\left(A_{11},A_{21},B_{11},B_{21},C_1,D\right)$}
\end{algorithm}
We illustrate this way of reducing the dimension of our realization with the following examples, where $e_i$ denotes a standard basis column vector of suitable length.

\begin{example}\label{ex1}
  Consider the generator matrix  $G(z_1,z_2)=\begin{pmatrix}
        z_2z_1+z_2\\ z_1
    \end{pmatrix}\in\mathbb F_2^{2\times 1}[z_1,z_2]$.
        With Algorithm 1 we obtain the following realization of dimension $5$:
    \begin{align*}
        A_1&=\begin{pmatrix}
            0&0&0&0&0\\
            0&0&0&1&0\\
            0&0&0&0&1\\
            0&0&0&0&0\\
            0&0&0&0&0
        \end{pmatrix}, A_2 = \begin{pmatrix}
            0&0&0&1&0\\
            0&0&0&0&0\\
            0&0&0&0&0\\
            0&0&0&0&0\\
            0&0&0&0&0
        \end{pmatrix},
        C=\begin{pmatrix}
            0&1&0&1&0\\0&0&0&0&1
        \end{pmatrix},\\
       B_1&=\begin{pmatrix}
            0\\ 0\\ 0\\ 0\\ 1
        \end{pmatrix},
        B_2=\begin{pmatrix}
            0\\ 0\\ 0\\ 1\\ 0
        \end{pmatrix}
        , D=\begin{pmatrix}
            0\\ 0
        \end{pmatrix}.
        \end{align*}
Hence, 
the row-span of the corresponding observability matrix $\cal O$ is $\langle e_2^\top+e_4^\top,e_5^\top,e_4^\top\rangle$. Thus, $\cal O$ does not have full rank and therefore, this realization is not locally observable.

More specifically, the matrix $\mathcal{O}$ has rank $3$ and 
$\ker \mathcal{O}=\langle e_1,e_3 \rangle$. Considering the invertible matrix $$
T=\begin{pmatrix}
0&0&0&1&0\\1&0&0&0&0\\0&0&0&0&1\\0&1&0&0&0\\0&0&1&0&0
        \end{pmatrix},$$
we obtain $$ A_{11}=\left[ \begin{array}{cc} I_3 & 0 \end{array} \right]T^{-1}A_1 T\left[ \begin{array}{c} I_3 \\ 0 \end{array}\right]=\begin{pmatrix}
0&1&0\\0&0&0\\0&0&0
        \end{pmatrix},$$ $$ A_{21}=\left[ \begin{array}{cc} I_3 & 0 \end{array} \right]T^{-1}A_2 T\left[ \begin{array}{c} I_3 \\ 0 \end{array}\right]=\begin{pmatrix}
0&0&0\\0&0&0\\0&0&0
        \end{pmatrix},$$ $$ B_{11}=\left[ \begin{array}{cc} I_3 & 0 \end{array} \right]T^{-1}B_1=\begin{pmatrix}
0\\0\\1
        \end{pmatrix},$$ $$ B_{21}=\left[ \begin{array}{cc} I_3 & 0 \end{array} \right]T^{-1}B_2=\begin{pmatrix}
0\\1\\0  \end{pmatrix}, $$ $$ C_1=C T\left[ \begin{array}{c} I_3 \\ 0 \end{array}\right]=\begin{pmatrix}
1&1&0\\0&0&1
        \end{pmatrix}.$$
$\Sigma\!=\!\left(A_{11},A_{21},B_{11},B_{21},C_1,D\right)$ is a locally reachable and locally observable realization of $G(z_1,z_2)$ of dimension $3$.
\end{example}

\begin{example}
    For $G(z_1,z_2)=\begin{pmatrix}
        z_2^2z_1\\ z_2z_1+z_2
    \end{pmatrix}\in\mathbb F_2^{2\times 1}[z_1,z_2]$, we obtain with Algorithm \ref{Alg 1} the $9$ dimensional realization $\Sigma=(A_1,A_2,B_1,B_2,C,D)$ with 
    \begin{align*}
        A_1&=\begin{pmatrix}
            0&0&0&0&0&0&0&0&0\\
            0&0&0&0&1&0&0&0&0\\
            0&0&0&0&0&1&0&0&0\\
            0&0&0&0&0&0&1&0&0\\
            0&0&0&0&0&0&0&0&0\\
            0&0&0&0&0&0&0&1&0\\
            0&0&0&0&0&0&0&0&1\\
            0&0&0&0&0&0&0&0&0\\
            0&0&0&0&0&0&0&0&0\\
        \end{pmatrix},
        A_2=\begin{pmatrix}
            0&0&0&0&1&0&0&0&0\\
            0&0&0&0&0&0&0&0&0\\
            0&0&0&0&0&0&0&0&0\\
            0&0&0&0&0&0&0&0&0\\
            0&0&0&0&0&0&0&1&0\\
            0&0&0&0&0&0&0&0&0\\
            0&0&0&0&0&0&0&0&0\\
            0&0&0&0&0&0&0&0&0\\
            0&0&0&0&0&0&0&0&0\\
        \end{pmatrix},\\
        B_1&=\begin{pmatrix}
            0&0&0&0&0&0&0&0&1
        \end{pmatrix}^T,
        B_2=\begin{pmatrix}
            0&0&0&0&0&0&0&1&0
        \end{pmatrix}^T,\\
        C&=\begin{pmatrix}
            0&1&0&0&0&0&0&0&0\\
            0&0&0&0&0&1&0&1&0
        \end{pmatrix},
        D=\begin{pmatrix}
            0\\0
        \end{pmatrix}
    \end{align*}
    and $\Psi(z_1,z_2)=\begin{pmatrix}
            z_2^3 & z_2^2z_1 & z_2z_1^2 & z_1^3 & z_2^2 & z_2z_1 & z_1^2 & z_2 & z_1
        \end{pmatrix}^T$.

The row-span of $\cal O$ is $\langle e_2^\top,e_6^\top+e_8^\top,e_5^\top,e_8^\top \rangle$  and therefore $\ker \mathcal{O}=\langle e_1,e_3,e_4,e_7,e_9 \rangle$. Considering the invertible matrix $$
T=\begin{pmatrix}
0&0&0&0&1&0&0&0&0\\1&0&0&0&0&0&0&0&0\\0&0&0&0&0&1&0&0&0\\0&0&0&0&0&0&1&0&0\\0&1&0&0&0&0&0&0&0\\0&0&1&0&0&0&0&0&0 \\ 0&0&0&0&0&0&0&1&0\\0&0&0&1&0&0&0&0&0\\0&0&0&0&0&0&0&0&1
        \end{pmatrix},$$
we obtain the locally reachable and locally observable realization of $G(z_1,z_2)$ of dimension $4$, $\Sigma\!=\!\left(A_{11},A_{21},B_{11},B_{21},C_1,D\right)$, where $$ A_{11}=\left[ \begin{array}{cc} I_4 & 0 \end{array} \right]T^{-1}A_1 T\left[ \begin{array}{c} I_4 \\ 0 \end{array}\right]=\begin{pmatrix}
0&1&0&0\\0&0&0&0\\0&0&0&1\\0&0&0&0
        \end{pmatrix},$$ $$ A_{21}=\left[ \begin{array}{cc} I_4 & 0 \end{array} \right]T^{-1}A_2 T\left[ \begin{array}{c} I_4 \\ 0 \end{array}\right]=\begin{pmatrix}
0&0&0&0\\0&0&0&1\\0&0&0&0\\0&0&0&0
        \end{pmatrix},$$ $$ B_{11}=\left[ \begin{array}{cc} I_4 & 0 \end{array} \right]T^{-1}B_1=\begin{pmatrix}
0\\0\\0\\0
        \end{pmatrix},$$ $$ B_{21}=\left[ \begin{array}{cc} I_4 & 0 \end{array} \right]T^{-1}B_2=\begin{pmatrix}
0\\0\\0\\1  \end{pmatrix}, $$ $$ C_1=C T\left[ \begin{array}{c} I_4 \\ 0 \end{array}\right]=\begin{pmatrix}
1&0&0&0\\0&0&1&1
        \end{pmatrix}.$$

In the following, we will explain how to obtain a realization with only dimension $3$. The only monomials appearing in $G(z_1,z_2)$ are $z_2^2z_1$, $z_2z_1$ and $z_2$. In this particular example, it is sufficient to choose
$\hat \Psi(z_1,z_2)=\begin{pmatrix}
        z_2^2z_1 & z_2z_1 & z_2
    \end{pmatrix}^T$ to be able to construct the realization 
    $\Sigma=(\hat A_1,\hat A_2,\hat B_1, \hat B_2,\hat C,D)$ with
    \begin{align*}
        \hat A_1=\begin{pmatrix}
            0&0&0\\
            0&0&1\\
            0&0&0
        \end{pmatrix},
        \hat A_2=\begin{pmatrix}
            0&1&0\\
            0&0&0\\
            0&0&0
        \end{pmatrix},
        \hat B_1=\begin{pmatrix}
            0\\0\\0
        \end{pmatrix},
        \hat B_2=\begin{pmatrix}
            0\\0\\1
        \end{pmatrix},
        \hat C=\begin{pmatrix}
            1&0&0\\
            0&1&1
        \end{pmatrix}, 
        D=\bf{0}.
    \end{align*}
    This realization is of smaller dimension $3$ and moreover, it is locally observable since 
    \begin{align*}
        \mathcal{O}=\begin{pmatrix}
            1&0&0\\
            0&1&1\\
            \hline
            0&0&0\\
            0&0&1\\
            \hline
            \vdots&\vdots&\vdots
        \end{pmatrix}
    \end{align*}
    has full rank.
    In general, it does not always work to just choose the monomials appearing in $G(z_1,z_2)$ to form $\hat{\Psi}(z_1,z_2)$ but we might need to add some further monomials to this set.
In particular, if $z_1^az_2^b$ for $a+b\geq 2$ is included in this set, then we also need to have either $z_1^{a-1}z_2^b$ or $z_1^az_2^{b-1}$ included in this set.
In this example, this condition is automatically fulfilled by just choosing the monomials appearing in $G(z_1,z_2)$ and we do not need to add further monomials. However, later in the paper, we illustrate with Example \ref{ex:2cols} a case where additional monomials besides the ones appearing in $G(z_1,z_2)$ need to be added to the set $\hat{\Psi}(z_1,z_2)$.    
An algorithm on how to form this set in general and how obtain a realization from it is presented in the following Algorithm \ref{alg more improved}.
\end{example}

\begin{algorithm}[H]
    \DontPrintSemicolon
    \SetAlgoNoEnd
    \SetAlgoNoLine
    \caption{}\label{alg more improved}
    \KwIn{$G(z_1,z_2) \in \mathbb F[z_1,z_2]^{n \times k}$ with column degrees $\nu_1,\nu_2,\dots, \nu_k$}
    \nl Let $D:=G(0,0)$ and $\hat G(z_1,z_2):=G(z_1,z_2)-G(0,0)$.\;
    \nl \For{$i=1,\dots,k$}{Let $\hat S_i$ be a minimal set of monomials such that
    \begin{itemize}
        \item all monomials that appear in column $i$ are in $\hat S_i$
        \item for $z_1^az_2^b\in \hat S_i$ with $a+b\geq 2$ follows $$a>0 \text{ and }z_1^{a-1}z_2^{b}\in \hat S_i \qquad \text{ or }\qquad b>0 \text{ and }z_1^{a}z_2^{b-1}\in \hat S_i.$$
    \end{itemize}
    With $\mathcal{Z}^{(\nu_i)}$ from Algorithm \ref{Alg 1} define $\hat{\mathcal{I}}_i:=\{j=1,\dots,\frac{\nu_i(\nu_i+3)}{2}\mid \mathcal{Z}^{(\nu_i)}_j\in \hat S_i\}$ 
    and let $\hat{\mathcal{Z}}^{(i)}:= \mathcal{Z}^{(\nu_i)}\mid_{\hat{\mathcal{I}}_i}$ be the vector restricted to the entries with indices in $\hat{\mathcal{I}}_i$.}
    \nl Define $$ \hat{\Psi}(z_1,z_2) ={\rm diag}\{\hat{\mathcal{Z}}^{(1)},\hat{\mathcal{Z}}^{(2)},\dots,\hat{\mathcal{Z}}^{(k)}\} $$ and let $\hat{C} \in \mathbb F^{n \times \vert \hat{\mathcal{I}}_i\vert}$ be such that $\hat G(z_1,z_2)=\hat{C}\hat{\Psi}(z_1,z_2)$.\;
        \nl \For{$i=1,\dots,k$}{Let $\hat A_1^{(i)}= \hat A_2^{(i)}=\mathbf{0}\in\mathbb{F}_q^{\vert\hat{\mathcal{I}}_i\vert\times \vert\hat{\mathcal{I}}_i\vert}$ and $\hat B_1^{(i)}=\hat B_2^{(i)}=\mathbf{0}\in\mathbb{F}_q^{\vert\hat{\mathcal{I}}_i\vert}$.
    \;
    \For{$\ell=1,\dots,\vert\hat{\mathcal{I}}_i\vert$}{ 
    Let $a,b\in\mathbb{N}_0$ be such that $\hat{\mathcal{Z}}^{(i)}_\ell=z_1^az_2^b$.\;
        \If{$a=1$ and $b=0$}{write a $1$ in row $\ell$ of $\hat B_1^{(i)}$}
        \If{$a=0$ and $b=1$}{write a $1$ in row $\ell$ of $\hat B_2^{(i)}$}
        \If{$z_1^{a-1}z_2^b\in\hat S_i$}{ there exists some $\hat\ell$ such that $\hat{\mathcal{Z}}^{(i)}_{\hat \ell}=z_1^{a-1}z_2^b$\;write a $1$ at position $(\ell,\hat\ell)$ in $\hat A_1^{(i)}$}
        \Else{there exists some $\hat\ell$ such that $\hat{\mathcal{Z}}^{(i)}_{\hat \ell}=z_1^{a}z_2^{b-1}$\;write a $1$ at position $(\ell,\hat\ell)$ in $\hat A_2^{(i)}$}
    }}
    \nl Let $$ \hat{A_1}:={\rm diag}\{\hat{A_1}^{(1)},\hat{A_1}^{(2)},\dots,\hat{A_1}^{(k)}\},$$ $$ \hat{A_2}:={\rm diag}\{\hat{A_2}^{(1)},\hat{A_2}^{(2)},\dots,\hat{A_2}^{(k)}\}, $$ $$\hat{B_1}:={\rm diag}\{\hat{B_1}^{(1)},\hat{B_1}^{(2)},\dots,\hat{B_1}^{(k)}\},$$ $$ \hat{B_2}:={\rm diag}\{\hat{B_2}^{(1)},\hat{B_2}^{(2)},\dots,\hat{B_2}^{(k)}\}.$$
    In case $\nu_i=0$ for some $i$, the $i$-th columns in $\hat B_{1}$, $\hat B_{2}$ and $\hat{\Psi}(z_1,z_2)$ are equal to zero, and the $i$-th diagonal blocks $\hat A_{1}^{(i)}$ and $\hat A_{2}^{(i)}$ are empty.\;
    \KwOut{$\Sigma\!=\!\left(\hat{A_1},\hat{A_2},\hat{B_1},\hat{B_2},\hat{C}, D\right)$}
\end{algorithm}

\begin{theorem}
    Algorithm \ref{alg more improved} gives a locally reachable realization of the encoder $G(z_1,z_2) \in \mathbb F[z_1,z_2]^{n \times k}$. 
\end{theorem}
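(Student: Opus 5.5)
The plan follows the proof of Theorem \ref{th realization}. Since everything is block diagonal, it suffices to treat one column $i$ with $\nu_i>0$; for $\nu_i=0$ the column is realized by $D$ alone. The central claim is the identity
$$\hat{\mathcal Z}^{(i)}(z_1,z_2)=(I-\hat A_1^{(i)}z_1-\hat A_2^{(i)}z_2)^{-1}(\hat B_1^{(i)}z_1+\hat B_2^{(i)}z_2),$$
which is equivalent to
$$\hat{\mathcal Z}^{(i)}=\hat A_1^{(i)}z_1\hat{\mathcal Z}^{(i)}+\hat A_2^{(i)}z_2\hat{\mathcal Z}^{(i)}+\hat B_1^{(i)}z_1+\hat B_2^{(i)}z_2,$$
provided $I-\hat A_1^{(i)}z_1-\hat A_2^{(i)}z_2$ is invertible. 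Given this identity, $\hat C\hat\Psi+D=\hat G+G(0,0)=G$, so the output system is a realization.

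\textbf{Step 1: the identity.} Check it row by row. Row $\ell$ corresponds to the monomial $z_1^az_2^b\in\hat S_i$, and Algorithm \ref{alg more improved} puts exactly one nonzero entry in row $\ell$.
\begin{itemize}
\item If $a+b=1$, the single $1$ sits in $\hat B_1^{(i)}$ or $\hat B_2^{(i)}$. The right-hand side is then $z_1$ or $z_2$, which is correct.
\item If $a+b\geq 2$, there is a single $1$ at position $(\ell,\hat\ell)$ of $\hat A_1^{(i)}$, where $\hat{\mathcal Z}_{\hat\ell}=z_1^{a-1}z_2^b$. The right-hand side is then $z_1\cdot z_1^{a-1}z_2^b$. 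The analogous statement holds for $\hat A_2^{(i)}$.
\end{itemize}
The closure condition on $\hat S_i$ guarantees that the required $\hat\ell$ exists.

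One care point is the $a+b=1$ case in the last if/else of the algorithm. When $z_1^{a-1}z_2^b$ or $z_1^az_2^{b-1}$ is $1$, the required $\hat\ell$ does not exist. I will read the algorithm as applying the $A$-branch only when $a+b\geq2$.

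\textbf{Step 2: invertibility.} Each $\hat A$-entry maps the row of a monomial of degree $d$ to the column of a monomial of degree $d-1$. Ordering the entries by degree, as in $\mathcal Z^{(\nu_i)}$, makes $\hat A_1^{(i)}z_1+\hat A_2^{(i)}z_2$ strictly triangular, hence nilpotent. Therefore $\det(I-\hat A_1^{(i)}z_1-\hat A_2^{(i)}z_2)=1$, so the matrix is unimodular and the identity may be solved for $\hat{\mathcal Z}^{(i)}$.

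This also shows that polynomial inputs give polynomial states. Hence codewords are the outputs from zero initial global state, exactly as after Theorem \ref{th realization}.

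\textbf{Step 3: local reachability, the main point needing care.} I want to show that every standard basis vector $e_\ell$ of block $i$ appears as a column of $\mathcal R$. The expansion
$$(I-\hat A_1z_1-\hat A_2z_2)^{-1}(\hat B_1z_1+\hat B_2z_2)=\sum_{r,t}\left[(\hat A_1{}^{r-1}\triangle^t\hat A_2)\hat B_1+(\hat A_1{}^{r}\triangle^{t-1}\hat A_2)\hat B_2\right]z_1^rz_2^t$$
is the standard Fornasini–Marchesini identity. It implies that the coefficient of $z_1^rz_2^t$ in $\hat\Psi$ is precisely the reachability column indexed by $(r,t)$.

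In column $i$ of $\hat\Psi$, each monomial $z_1^az_2^b\in\hat S_i$ occurs exactly once, namely in row $\ell$. Hence the coefficient of $z_1^az_2^b$ in that column is $e_\ell$. Ranging over all $\ell$ and all blocks, $\mathcal R$ contains every standard basis vector, i.e. it contains the identity as a submatrix and so has full row rank.
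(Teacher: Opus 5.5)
Your proof is correct. Step 1 is the paper's row-by-row check of equation (\ref{equ}), but your reachability argument takes a different route from the paper's.

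The paper first observes that $\begin{pmatrix}\hat A_1 & \hat A_2 & \hat B_1 & \hat B_2\end{pmatrix}$ contains the identity. It then follows a chain of monomials $z_1^az_2^b\to\cdots\to z_1$ or $z_2$ to show that each nonzero column of $\hat A_1$ and $\hat A_2$ equals one summand of the $(a,b)$ reachability column. A separate, somewhat informal, argument shows that all other summands (other chains) vanish, because the algorithm places $1$s along only one chain.

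You instead read the columns of $\mathcal R$ directly off as the coefficient vectors of $\hat\Psi=(I-\hat A_1z_1-\hat A_2z_2)^{-1}(\hat B_1z_1+\hat B_2z_2)$. The vanishing of the other summands then follows automatically from that identity together with the distinctness of the monomials in each $\hat{\mathcal Z}^{(i)}$. The rows belonging to $z_1$, $z_2$ and to higher monomials are also handled uniformly.

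The same reasoning gives a general criterion: a realization is locally reachable if and only if the rows of $(I-A_1z_1-A_2z_2)^{-1}(B_1z_1+B_2z_2)$ are $\mathbb F$-linearly independent. That criterion would, for instance, also certify local reachability of the ad hoc realizations at the end of Section 4. The paper's chain argument, in exchange, exhibits an explicit word in $\hat A_1,\hat A_2,\hat B_1,\hat B_2$ producing each basis vector.

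Your Step 2 is more than the statement strictly needs. The determinant of $I-\hat A_1z_1-\hat A_2z_2$ has constant term $1$, so the matrix is invertible over $\mathbb F(z_1,z_2)$, and over power series, in any case. Still, the unimodularity you obtain, and hence polynomial states, is what the trellis in the Viterbi algorithm uses; the paper remarks this only for Algorithm \ref{Alg 1}. Finally, your reading of the $a+b=1$ case of the last if/else is exactly the one implicit in the paper's Cases 1--4.
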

\begin{proof}
    We need to show that for each column $i=1,\dots,k$ of $G(z_1,z_2)$, the equation
    \begin{equation}\label{equ}
        (I-\hat A_1^{(i)}z_1-\hat A_2^{(i)}z_2)\hat{\mathcal{Z}}^{(i)}=\hat B_1^{(i)}z_1+\hat B_2^{(i)}z_2
    \end{equation}
    holds.
    For each $j=1,\dots,\vert\hat{\mathcal{I}}_i\vert$, we have exactly one of the following $4$ cases:
    
    \textbf{Case 1:} $\hat{\mathcal{Z}}_j^{(i)}=z_1$\\
        Then, the $j$-th entry on the left side of equation (\ref{equ}) is equal to $z_1$, and the $j$-th entry on the right side is also equal to $z_1$.

    \textbf{Case 2:} $\hat{\mathcal{Z}}_j^{(i)}=z_2$\\
        Then, the $j$-th entry on the left side of equation (\ref{equ}) is equal to $z_2$, and the $j$-th entry on the right side is also equal to $z_2$.
        
    \textbf{Case 3:} $\hat{\mathcal{Z}}_j^{(i)}=z_1^az_2^b$ for $a+b\geq 2$ and there exists some $\hat j$ such that $\hat{\mathcal{Z}}_{\hat{j}}^{(i)}=z_1^{a-1}z_2^b$\\
        Then, the $j$-th entry on the left side of equation (\ref{equ}) is equal to $\hat{\mathcal{Z}}_j^{(i)}-z_1\hat{\mathcal{Z}}_{\hat{j}}^{(i)}=z_1^az_2^b-z_1z_1^{a-1}z_2^b=0$, and the $j$-th entry on the right side is also equal to $0$.
    
    \textbf{Case 4:} $\hat{\mathcal{Z}}_j^{(i)}=z_1^az_2^b$ for $a+b\geq 2$ and there exists some $\hat j$ such that $\hat{\mathcal{Z}}_{\hat{j}}^{(i)}=z_1^{a}z_2^{b-1}$ and $z_1^{a-1}z_2^b\notin \hat S_i$\\
        Then, the $j$-th entry on the left side of equation (\ref{equ}) is equal to $\hat{\mathcal{Z}}_j^{(i)}-z_2\hat{\mathcal{Z}}_{\hat{j}}^{(i)}=z_1^az_2^b-z_2z_1^{a-1}z_2^b=0$, and the $j$-th entry on the right side is also equal to $0$.

    Therefore, equation (\ref{equ}) holds, and Algorithm \ref{alg more improved} gives a realization of $G(z_1,z_2)$.

   It remains to show that each realization obtained form Algorithm \ref{alg more improved} is locally reachable.
    Because of the block diagonal structure of the matrices $\hat{A}_1,\hat{A_2}, \hat{B}_1,\hat{B}_2$ it is enough to show this statement for $k=1$. We proof it by showing that each non-zero column of $\hat{A}_1$ and $\hat{A}_2$ is also a column of $\mathcal{R}$, which implies that $\mathcal{R}$ contains the identity matrix.\\
    Each row of the matrix $\begin{pmatrix}
        \hat{A}_1 & \hat{A}_2 & \hat{B}_1 & \hat{B_2}
    \end{pmatrix}$ contains exactly one non-zero entry and this entry is equal to $1$. Moreover, each column of $\begin{pmatrix}
        \hat{A}_1 & \hat{A}_2 & \hat{B}_1 & \hat{B_2}
    \end{pmatrix}$ contains at most one non-zero entry. Note that these last two properties already imply that the matrix $\begin{pmatrix}
        \hat{A}_1 & \hat{A}_2 & \hat{B}_1 & \hat{B_2}
    \end{pmatrix}$ contains the identity matrix.\\
    Consider an arbitrary non-zero column of $\hat{A}_1$. The row of its non-zero entry corresponds to a polynomial $z_1^az_2^b$ from $\hat{\Psi}(z_1,z_2)$ for which also $z_1^{a-1}z_2^b$ is in $\hat{\Psi}(z_1,z_2)$. Overall, there is a chain of monomials from $\hat{\Psi}(z_1,z_2)$ going from $z_1^az_2^b$ to either $z_1$ or $z_2$ where in each step of the chain the exponent of either $z_1$ or $z_2$ is decreased by $1$ and if both is possible in $\hat{\Psi}(z_1,z_2)$, then the exponent of $z_1$ is decreased. Each time the exponent of $z_1$ is decreased this corresponds to a $1$ in $\hat{A}_1$ and each time the exponent of $z_2$ is decreased this corresponds to a $1$ in $\hat{A}_2$. Overall, if we multiply all the matrices corresponding to such a chain with $\hat{B}_1$ if we end with $z_1$ and with $\hat{B}_2$ if we end with $z_2$, we obtain the column of $\hat{A}_1$ we started with. For example if $\hat{\Psi}(z_1,z_2)=\{z_2^2z_1,z_2^3,z_2^2, z_2z_1,z_2\}$,
    i.e., 
    \begin{align*}
        \hat{A}_1=\begin{pmatrix}
            0 & 0 & 1 & 0 & 0\\
            0 & 0 & 0 & 0 & 0\\
            0 & 0 & 0 & 0 & 0\\
            0 & 0 & 0 & 0& 1\\
            0 & 0 & 0 & 0 & 0
        \end{pmatrix}, \hat{A}_2=\begin{pmatrix}
            0 & 0 & 0 & 0 & 0\\
            0 & 0 & 1 & 0 & 0\\
            0 & 0 & 0 & 0 & 1\\
            0 & 0 & 0 & 0 & 0\\
            0 & 0 & 0 & 0 & 0
        \end{pmatrix}, \hat{B}_1=\mathbf{0}, \hat{B}_2=e_5,
    \end{align*}
    then the chain $z_2^2z_1\rightarrow z_2^2\rightarrow z_2$ corresponds to the fact that $\hat{A}_1\hat{A}_2\hat{B}_2$ is equal to the third column of $\hat{A}_1$. Overall, all columns of $\hat{A}_1$ are equal to a summand of $\left(\hat{A}_1 \, ^{i-1}\!\triangle^{j}   \hat{A}_2\right)\hat{B}_1 + \left(\hat{A}_1 \, ^i\!\triangle^{j-1}   \hat{A}_2\right)\hat{B}_2$ for some for $i,j\geq 0$, in the above example, $i=1$ and $j=2$.
    It remains to show that all the other summands of such a sum are equal to zero. For fixed $i$ and $j$, all summands correspond to possible chains of monomials with the same starting and end point either $z_1$ or $z_2$. Since Algorithm \ref{alg more improved} places $1$s in $\begin{pmatrix}
        \hat{A}_1 & \hat{A}_2 & \hat{B}_1 & \hat{B_2}
    \end{pmatrix}$ for only one of these paths, all other summands are zero. For the above example this means that $\hat{A}_2\hat{A}_1\hat{B}_2=\bf{0}$ (corresponding to the chain $z_2^2z_1\rightarrow z_1z_2\rightarrow z_2$) and $\hat{A}_2^2\hat{B}_1=\bf{0}$
    (corresponding to the chain $z_2^2z_1\rightarrow z_1z_2\rightarrow z_1$).
    \\
    This shows that all non-zero columns of $\hat{A}_1$ are also columns of $\mathcal{R}$. With exactly the same reasoning one can show that all non-zero columns of $\hat{A}_2$ are also columns of $\mathcal{R}$, which finishes the proof.
\end{proof}

If we apply the previous algorithm to the Example \ref{ex1}, we get the same realization as with Algorithm \ref{Alg 4}.

  \begin{example} As in Example \ref{ex1} we consider 
 $G(z_1,z_2)=\begin{pmatrix}
        z_2z_1+z_2\\ z_1
    \end{pmatrix}$.
      With Algorithm \ref{alg more improved} we obtain:
$$\hat \Psi(z_1,z_2)=\begin{pmatrix}
        z_2z_1 & z_2 & z_1
    \end{pmatrix}^T, 
    \hat{B}_2=\begin{pmatrix}
        0\\1\\0
    \end{pmatrix},
    \hat{B}_1=\begin{pmatrix}
        0\\0\\1
    \end{pmatrix}$$
     
    \begin{align*}
        \hat{A}_1=\begin{pmatrix}
            0&1&0\\0&0&0\\0&0&0
        \end{pmatrix}, \hat{A}_2=\begin{pmatrix}
            0&0&0\\0&0&0\\0&0&0
        \end{pmatrix}, \hat{C}=\begin{pmatrix}
            1&1&0\\0&0&1
        \end{pmatrix}.
    \end{align*}
    \end{example}

In the following, we present an example where Algorithm \ref{Alg 4} and Algorithm \ref{alg more improved} give again realizations of the same dimension but now the realizations itself are different. Actually, in this example, the realization from Algorithm \ref{alg more improved} is not locally observable.

\begin{example}
 Consider $G(z_1,z_2)=\begin{pmatrix}
        z_2^2z_1^2+z_2+z_1\\ z_2^2z_1+z_2z_1
    \end{pmatrix}\in\mathbb F_2^{2\times 1}[z_1,z_2]$.    
Applying Algorithm \ref{Alg 1}, we obtain the realization $(A_1,A_2,B_1,B_2,C,D)$ where
$$
A_1=\left[\begin{array}{cccccccccccccc}      0&0&0&0&0&e_2&e_3&e_4&e_5&e_7&e_8&e_9&e_{11}&e_{12}
    \end{array}\right],
$$
$$
A_2=\left[\begin{array}{cccccccccccccc}      0&0&0&0&0&e_1&0&0&0&e_6&0&0&e_{10}&0
    \end{array}\right],
$$
$$
B_1=\left[\begin{array}{c}  e_{14}
    \end{array}\right], \;\;\; B_2=\left[\begin{array}{c}  e_{13}
    \end{array}\right], $$
    $$ C=\left[\begin{array}{cc} e_3+e_{13}+e_{14} & e_7+e_{11}
    \end{array}\right]^T, \;\;\; D=\left[\begin{array}{c} 0 \\ 0
    \end{array}\right].
$$ 
The row-span of the observability matrix $\cal O$ is $\langle e_3^T+e_{14}^T,e_7^T, e_{10}^T,e_{11}^T,e_{13}^T \rangle$ and therefore, $\cal O$ has rank $5$ and $\ker \mathcal{O}=\langle e_1,e_2,-e_3+e_{14},e_4,e_5,e_6,e_8,e_9,e_{12} \rangle$.
Applying now Algorithm \ref{Alg 4}, we consider
$$
T=\left[\begin{array}{cccccccccccccc}      e_3&e_7&e_{10}&e_{11}&e_{13}&e_1&e_2&-e_3+e_{14}&e_4&e_5&e_6&e_8&e_9&e_{12}
    \end{array}\right]
$$
to obtain the locally reachable and locally observable realization of $G(z_1,z_2)$ of dimension $5$, $\Sigma\!=\!\left(A_{11},A_{21},B_{11},B_{21},C_1,D\right)$, where $$ A_{11}=\left[ \begin{array}{cc} I_5 & 0 \end{array} \right]T^{-1}A_1 T\left[ \begin{array}{c} I_5 \\ 0 \end{array}\right]=\begin{pmatrix}
0&1&0&0&0\\0&0&1&0&0\\0&0&0&0&0\\0&0&0&0&1\\0&0&0&0&0 \end{pmatrix},$$ $$ A_{21}=\left[ \begin{array}{cc} I_5 & 0 \end{array} \right]T^{-1}A_2 T\left[ \begin{array}{c} I_5 \\ 0 \end{array}\right]=\begin{pmatrix}
0&0&0&0&0\\0&0&0&0&0\\0&0&0&0&1\\0&0&0&0&0\\0&0&0&0&0 \end{pmatrix},$$ $$ B_{11}=\left[ \begin{array}{cc} I_5 & 0 \end{array} \right]T^{-1}B_1=\begin{pmatrix}
1\\0\\0\\0\\0
        \end{pmatrix},$$ $$ B_{21}=\left[ \begin{array}{cc} I_5 & 0 \end{array} \right]T^{-1}B_2=\begin{pmatrix}
0\\0\\0\\0\\1  \end{pmatrix}, $$ $$ C_1=C T\left[ \begin{array}{c} I_5 \\ 0 \end{array}\right]=\begin{pmatrix}
1&0&0&0&1\\0&1&0&1&0
        \end{pmatrix}.$$

In contrast, with Algorithm \ref{alg more improved}, we obtain the following realization:
\begin{align*}
\hat \Psi(z_1,z_2)&=\begin{pmatrix}
      z_2^2z_1^2 & z_2^2z_1 &  z_2z_1 & z_2 & z_1
    \end{pmatrix}^T,\\
    \hat{A}_1&=\begin{pmatrix}
         0 & 1 & 0 & 0 & 0\\
         0 & 0 & 0 & 0 & 0\\
          0 & 0 & 0 & 1 & 0\\
           0 & 0 & 0 & 0 & 0\\
            0 & 0 & 0 & 0 & 0
    \end{pmatrix},
    \hat{A}_2=\begin{pmatrix}
         0 & 0 & 0 & 0 & 0\\
         0 & 0 & 1 & 0 & 0\\
          0 & 0 & 0 & 0 & 0\\
           0 & 0 & 0 & 0 & 0\\
            0 & 0 & 0 & 0 & 0
    \end{pmatrix},\\
     \hat{B}_1&=\begin{pmatrix}
      0\\ 0\\  0\\0\\1
    \end{pmatrix},
    \hat{B}_2=\begin{pmatrix}
      0\\ 0\\  0\\1\\0
    \end{pmatrix},
    \hat{C}=\begin{pmatrix}
        1  & 0 & 0 & 1 & 1\\
        0 & 1 & 1 & 0 & 0
    \end{pmatrix},   
\end{align*}
i.e., the row-span of $\hat{\mathcal{O}}$ is $\langle e_1^\top+e_5^\top,e_2^\top,e_3^\top,e_4^\top\rangle$ and hence, $\hat{\mathcal{O}}$  
is not full rank and the realization is not locally observable.

In the following, we explain a method with which we can further decrease the dimension of the realization. To this end we consider again equation \eqref{equ} and search for matrices $A_1$, $A_2$, $B_1$, $B_2$, $\mathcal{Z}$ of smallest possible size fulfilling this equation such that the entries of $\mathcal{Z}$ form a set of polynomials from $\mathbb F[z_1,z_2]$ whose $\mathbb F$-span includes the set of entries of $G(z_1,z_2)$ (and $A_1$, $A_2$, $B_1$, $B_2$ have entries from $\mathbb F$). For our concrete example the following equation provides such a solution of dimension $4$:
$$\begin{pmatrix}
    1 & -z_1 & 0 & 0\\
    0 & 1 & -z_2 & 0\\
    0 & 0 & 1 & -z_2\\
    0 & 0 & 0 & 1
\end{pmatrix}\begin{pmatrix}
    z_2^2z_1^2+z_2\\
    z_2^2z_1\\ z_1z_2\\ z_1
\end{pmatrix}=\begin{pmatrix}
    z_2\\ 0\\ 0\\ z_1
\end{pmatrix},$$
i.e.
\begin{align*}
A_1=\begin{pmatrix}
    0 & 1 & 0 & 0\\
    0 & 0 & 0 & 0\\
      0 & 0 & 0 & 0\\
      0 & 0 & 0 & 0
\end{pmatrix}\hspace{-1mm}, A_2=\begin{pmatrix}
    0 & 0 & 0 & 0\\
    0 & 0 & 1 & 0\\
      0 & 0 & 0 & 1\\
      0 & 0 & 0 & 0
\end{pmatrix}\hspace{-1mm},
B_1=\begin{pmatrix}
    0\\0\\0\\1
\end{pmatrix}\hspace{-1mm}, B_2=\begin{pmatrix}
    1\\0\\0\\0
\end{pmatrix}\hspace{-1mm}, \mathcal{Z}=\begin{pmatrix}
    z_2^2z_1^2+z_2\\
    z_2^2z_1\\ z_1z_2\\ z_1
\end{pmatrix}.
\end{align*}
Then, we solve the equation $\hat G(z_1,z_2)=C\mathcal{Z}$ (in our case $D=G(0,0)=0$, i.e., $\hat G(z_1,z_2)=G(z_1,z_2)$) for $C$ to obtain
$$C=\begin{pmatrix}
    1 & 0 & 0 & 1\\
    0 & 1 & 1 & 0
\end{pmatrix}.$$
Calculating 
$$\mathcal{O}=\begin{pmatrix}
    1 & 0 & 0 & 1\\
    0 & 1 & 1 & 0\\
     0 & 1 & 0 & 0\\
       0 & 0 & 0 & 0\\
      0 & 0 & 0 & 0\\
      0 & 0 & 1 & 1\\
      \vdots
\end{pmatrix},$$
which is full rank, shows that this realization also is locally observable.
\end{example}

In the following, we will formalize the idea presented at the end of the preceding example.

\begin{lemma}
Let $s_j$ be the minimal number of elements of $\mathbb F[z_1,z_2]$ needed such that each entry of column $j$ of $G(z_1,z_2)\in\mathbb F_q[z_1,z_2]^{k\times n}$ is in the $\mathbb F_q$-span of these elements. Then, the dimension of a minimal realization of $G(z_1,z_2)$ is lower bounded by $\max_{j\in\{1,\hdots,k\}}\{s_j\}$.
\end{lemma}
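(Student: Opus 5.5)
The idea is to show that any realization of dimension $s$ exhibits, for each column $j$ separately, a spanning set of size at most $s$ for the entries of column $j$ of $G(z_1,z_2)$.

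Let $\Sigma=(A_1,A_2,B_1,B_2,C,D)$ be any realization of $G(z_1,z_2)$ of dimension $s$. Set
$$\Psi(z_1,z_2):=(I-A_1z_1-A_2z_2)^{-1}(B_1z_1+B_2z_2),$$
which is an $s\times k$ matrix. By definition of a realization, $G(z_1,z_2)=C\Psi(z_1,z_2)+D$. Taking the $j$-th column gives $g_j=C\psi_j+d_j$, where $\psi_j\in\mathbb F(z_1,z_2)^s$ is the $j$-th column of $\Psi$ and $d_j\in\mathbb F^n$ is the $j$-th column of $D$. Hence every entry of $g_j$ lies in the $\mathbb F$-span of the $s$ entries of $\psi_j$ together with the constant $1$.

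\textbf{Step 1: the entries of $\psi_j$ can be taken polynomial.} The definition of $s_j$ requires elements of $\mathbb F[z_1,z_2]$, while the entries of $\psi_j$ are a priori only rational functions. I would pass to the $\mathbb F$-span $V\subseteq\mathbb F(z_1,z_2)$ of the entries of $\psi_j$ and intersect it with $\mathbb F[z_1,z_2]$. The space $V\cap\mathbb F[z_1,z_2]$ still has dimension at most $s$. It contains the entries of $g_j-d_j$, because each such entry is a polynomial lying in $V$. Choosing a basis of $V\cap\mathbb F[z_1,z_2]$ therefore gives at most $s$ polynomials spanning the entries of $g_j-d_j$.

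\textbf{Step 2: the constant term $d_j$.} To absorb $d_j$ with no extra element, I would use the expansion
$$\Psi=\sum_{m\ge 0}(A_1z_1+A_2z_2)^m(B_1z_1+B_2z_2),$$
which shows that $\Psi$ vanishes at $(0,0)$, so constants do not lie in $V$. This makes a careful count necessary. Simply adding $1$ to the spanning set would give only the bound $s_j\le s+1$.

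I expect this constant term to be the main obstacle. There are two possible ways to handle it. The first is to note that the minimal-realization notion and $s_j$ in the lemma are implicitly for $\hat G=G-G(0,0)$: Algorithm 1 and Algorithm 5 both split off $D=G(0,0)$ first, and the example preceding the lemma works with $\hat G$. With $s_j$ defined for $\hat G$, the steps above give $s_j\le s$ directly. The second, for general $G$, is to check whether $1$ is already forced to be in the span. It is not, so in that case I would state the convention about $\hat G$ explicitly rather than prove the $+1$-free bound.

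\textbf{Conclusion.} Given the convention, $s_j\le s$ holds for every $j$ and every realization. Applying this to a minimal realization yields
$$\dim\ge\max_{j\in\{1,\dots,k\}} s_j.$$
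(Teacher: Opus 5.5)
Your proof is the paper's argument: with $D$ split off, the entries of column $j$ of $G-D=C\Psi$ lie in the $\mathbb F$-span of the $s$ entries of the $j$-th column of $\Psi=(I-A_1z_1-A_2z_2)^{-1}(B_1z_1+B_2z_2)$, and the paper handles the constant term by simply saying ``without loss of generality $D=G(0,0)=0$''. Your extra care is justified: that reduction is not harmless for the statement as literally written. For example, $G=(1,\ z_1)^\top$ has the one-dimensional realization $A_1=A_2=B_2=0$, $B_1=1$, $C=(0,1)^\top$, $D=(1,0)^\top$, yet $s_1=2$. So the convention that $s_j$ is computed for $\hat G=G-G(0,0)$ is genuinely needed, and your Step~1 supplies the passage from rational to polynomial spanning elements that the paper leaves implicit.
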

\begin{proof}
Without loss of generality assume $D=G(0,0)=0$. We have that 
$G(z_1,z_2)=C(I-A_1z_1-A_2z_2)^{-1}(B_1z_1+B_2z_2)$. Each entry in column $j$ of $G(z_1,z_2)$ is a linear combination (with constant coefficients that are entries of $C$) of the entries of the $j$-th column of $(I-A_1z_1-A_2z_2)^{-1}(B_1z_1+B_2z_2)$. 
\end{proof}

The method used in the end of the previous example to further reduce the dimension is exactly the search for the minimal number of elements whose $\mathbb F_2$-span covers all entries of $G(z_1,z_2)$ such that additionally all of these elements can be obtained as entries of $(I-A_1z_1-A_2z_2)^{-1}(B_1z_1+B_2z_2)$ for suitable matrices $A_1,A_2,B_1,B_2$.
There is no general algorithm telling us how to find such a minimal realization but this has to be considered case by case for each specific example.

Besides reducing the dimension of the previous example in this way, this allows us to also further reduce the dimension of Example \ref{ex1}, even if Algorithm \ref{Alg 4} and Algorithm \ref{alg more improved} gave the same observable realization.

\begin{example}
Considering again  $G(z_1,z_2)=\begin{pmatrix}
        z_2z_1+z_2\\ z_1
    \end{pmatrix}$,   
we observe that using
$$\begin{pmatrix}
    1 & -z_2\\
    0 & 1
\end{pmatrix}\begin{pmatrix}
    z_1z_2+z_2\\ z_1
\end{pmatrix}=\begin{pmatrix}
    z_2\\ z_1
\end{pmatrix}$$
leads to the $2$-dimensional realization 
$$A_1=\begin{pmatrix}
    0 & 0\\ 0 & 0
\end{pmatrix}, 
A_2=\begin{pmatrix}
    0 & 1\\ 0 & 0
\end{pmatrix}, 
B_1=\begin{pmatrix}
    0\\ 1
\end{pmatrix}, 
B_2=\begin{pmatrix}
    1\\ 0
\end{pmatrix}, C=\begin{pmatrix}
    1 & 0\\ 0 & 1
\end{pmatrix},$$
which clearly is observable.
\end{example}

We conclude this section with the following example that shows that for generator matrices with $k>1$, the choice of the generator matrix matters for the size of the realization we obtain.

\begin{example}\label{ex:2cols}
    Consider 
    $$G(z_1,z_2)=\begin{pmatrix}
        z_1^2 & z_1^2\\
        z_1z_2& z_2\\
        z_2   & 0
    \end{pmatrix}\quad \text{and}\quad \Tilde{G}(z_1,z_2)=\begin{pmatrix}
        0 & z_1^2\\
        z_2+z_1z_2&z_2\\
        z_2 &0
    \end{pmatrix},$$ which are generator matrices of the same convolutional code. Applying Algorithm \ref{alg more improved} to the matrix $G(z_1,z_2)$ we obtain a realization of dimension $7$, in particular
    $$\hat{\Psi}(z_1,z_2)=\begin{pmatrix}
        z_1z_2&z_1^2&z_2&z_1\\
        &&&&z_1^2&z_2&z_1
    \end{pmatrix}^T.$$
    However, applying Algorithm \ref{alg more improved} to the matrix $\Tilde{G}(z_1,z_2)$ we obtain a realization of dimension $5$ with
    $$\hat{\Psi}(z_1,z_2)=\begin{pmatrix}
        z_1z_2&z_2\\
        &&z_1^2&z_2&z_1
    \end{pmatrix}^T,$$
    i.e.
\begin{align*}
A_1&=\begin{pmatrix}
  0 & 1 & 0 & 0 & 0\\
 0 & 0 & 0 & 0 & 0\\
 0 & 0 & 0 & 0 & 1\\
    0 & 0 & 0 & 0 & 0\\
    0 & 0 & 0 & 0 & 0
\end{pmatrix}, A_2=\begin{pmatrix}
    0 & 0 & 0 & 0 & 0\\
    0 & 0 & 0 & 0 & 0\\
   0 & 0 & 0 & 0 & 0\\
   0 & 0 & 0 & 0 & 0\\
   0 & 0 & 0 & 0 & 0
\end{pmatrix}, B_1=\begin{pmatrix}
   0&0\\
   0 & 0\\
     0 & 0\\
       0 & 0\\
   0 &1
\end{pmatrix}, B_2=\begin{pmatrix}
     0 & 0\\
    1 &0\\
      0 & 0\\
        0 & 1\\
          0 & 0
\end{pmatrix},\\
C&=\begin{pmatrix}
  0 & 0 & 1 & 0 & 0\\
  1 & 1 &  0 & 1 & 0\\
  0 & 1 & 0 & 0 & 0
\end{pmatrix}.\end{align*}
The row-span of $\mathcal{O}$ is equal to $\langle e_3^\top, e_1^\top+e_2^\top+e_4^\top,e_2^\top, e_5^\top, e_1^\top\rangle$
and hence, $\mathcal{O}$ is full rank and this realization is observable.
\end{example}

\section{Conclusion}
In this paper, we introduced an error-correction decoding method for two-dimensional (2D) convolutional codes, called the 2D Viterbi algorithm, which extends the classical Viterbi algorithm to the two-dimensional setting. Since the computational complexity of the decoding algorithm depends directly on the dimension of the linear systems representation, obtaining realizations of small dimension is essential to keep decoding feasible. We presented several methods to reduce the system dimension but it is still an open problem to find a minimal realization of an encoder. In this context, another problem related to minimality concerns the characterization of encoders that admit minimal dimensional realizations among all encoders of a given code. While this problem has been solved in the one-dimensional case, its extension to the two-dimensional setting is expected to be considerably more challenging, as the minimal realizations of a code will be minimal realizations of certain encoders, thus requiring the determination of minimal realizations for each encoder.

\section*{Acknowledgements}

The first and fourth author are supported by the Center for Research and Development in Mathematics and Applications (CIDMA) under the Portuguese Foundation for Science and Technology (FCT, https://ror.org/00snfqn58) Multi-Annual
Financing Program for R\&D Units, grants UID/4106/2025 and UID/PRR/4106/2025.
The second and third author are supported by the German research foundation, project number
513811367.

\bibliographystyle{alpha}
\bibliography{biblio} 

\appendix

\section{Step-by-Step Example of the 2D Viterbi Algorithm (Algorithm \ref{alg 2D viterbi})}

Consider the encoder $G(z_1,z_2)=
\in\mathbb F_2[z_1,z_2]^{2\times 1}$, i.e. $n=2$, $k=1$, $\nu_1=1$.
Calculating a realization for $G(z_1,z_2)$ according to Algorithm \ref{Alg 1}, 
one obtains $A_1=A_2=%
$. 
Easy calculations show that there exists no realization of dimension one.
We assume that the codeword $$c(z_1,z_2)=G(z_1,z_2)\cdot (z_2+1)=%
$$ is sent and that the received word is equal to $r(z_1,z_2)=%
$

The global updating equations are:
\[
X_{i+1}(\xi) = \left( %
 U_i(\xi)
\]
Now, from the received word, we have that:
\[
R_0(\xi) = %
, \quad N=2.
\]
We initialize the algorithm with:
\[
i=0: \ d_{\min} = \infty, \ sp_{\min} = \emptyset, \quad X_0(\xi)=\mathbf{0}: (d=0, sp=\emptyset)
\]
Evaluating the branches reaching the nodes at this stage:
\begin{itemize}

    \item For vertex $X_1(\xi)=%
\right)$ is the label of the arc $\left(%
 \right\}
    \]
    Updating the partial survivor history:
    \[
    \overline{sp} = \emptyset \rightsquigarrow \overline{sp} = \left\{ %
\right)$ is the label of the arc $\left(%
 \right\}
    \]
    Updating its partial survivor history and label:
    \[
    \overline{sp} = \emptyset \rightsquigarrow \overline{sp} = \left\{ %
    }
    \caption{Trellis expansion at stage $i=0$.}
    \label{fig:trellis_step_0}
\end{figure}

Next, see Figure \ref{fig:trellis_step_1} to consider:
\[
i=1: \ d_{\min} = \infty, \ sp_{\min} = \emptyset
\]
\begin{itemize}
    \item For vertex $X_2(\xi)=%
\right)$ is the label of the arc $\left(%
\right)$.\\
   
    Hence, we have
   \begin{align*}
    &\text{sum 1: } 1 + d\left( %
 \right\}
    \end{align*}
    Updating the partial survivor history:
    \[
    \overline{sp} = \emptyset \rightsquigarrow \overline{sp} = \left\{ \left( %
\right)$\\
   
    Hence, we have
\begin{align*}
    &\text{sum 1: } 1 + d\left( %
 \right\}
    \end{align*}
        Updating its partial survivor history and label:
    \[
    \overline{sp} = \emptyset \rightsquigarrow \overline{sp} = \left\{ \left( %
\right)$\\
    
    Hence, we have
    \begin{align*}
    &\text{sum 1: } 1 + d\left( %
 \right\}
    \end{align*}
    Updating its partial survivor history and label:
        \[
    \overline{sp} = \emptyset \rightsquigarrow \overline{sp} = \left\{ \left( %
\right)$\\
  
   Hence, we have
   \begin{align*}
    &\text{sum 1: } 1 + d\left( %
 \right\}
    \end{align*}
    Updating its partial survivor history and label:
    \[
    \overline{sp} = \emptyset \rightsquigarrow \overline{sp} = \left\{ \left( %
    }
    \caption{Trellis expansion at stage $i=1$.}
    \label{fig:trellis_step_1}
\end{figure}

\newpage

Next, see Figures \ref{fig:trellis_topology_2}, \ref{fig:trellis_state_00}, \ref{fig:trellis_state_1xi}, \ref{fig:trellis_state_xixi2} and \ref{fig:trellis_state_xi1} to consider:
\[
\ i=2: \ d_{\min} = \infty, \ sp_{\min} = \emptyset
\]

\begin{itemize}
    \item For vertex $X_3(\xi)=%
\right)$\\
    Hence, we have
    \begin{align*}
    &\text{sum 1: } 0 + d\left( %
 \right\}
    \end{align*}
    Updating its partial survivor history and label:
    \[
    \overline{sp} = \emptyset \rightsquigarrow \overline{sp} = \left\{ \left( %
 \right) \right\} \right)
    \]
    Since we are at stage $i=2\geq N$, we now set:
    \[
    d_{\min}= 1,\ sp_{\min}=\left\{ \left( %
\right)$\\
    Hence, we have
    \begin{align*}
    &\text{sum 1: } 0 + d\left( %
 \right\}
    \end{align*}
    Updating its partial survivor history and label:
    \[
    \overline{sp} = \emptyset \rightsquigarrow \overline{sp} = \left\{ \left( %
\right)$\\
    Hence, we have
    \begin{align*}
    &\text{sum 1: } 0 + d\left( %
 \right\}
    \end{align*}
    Updating its partial survivor history and label:
    \[
    \overline{sp} = \emptyset \rightsquigarrow \overline{sp} = \left\{ \left( %
\right)$\\
    Hence, we have
    \begin{align*}
    &\text{sum 1: } 0 + d\left( %
 \right\}
    \end{align*}
    Updating its partial survivor history and label:
    \[
    \overline{sp} = \emptyset \rightsquigarrow \overline{sp} = \left\{ \left( %
\right)$\\
    Hence, we have
    \begin{align*}
    &\text{sum 1: } 0 + d\left( %
 \right\}
    \end{align*}
    Updating its partial survivor history and label:
    \[
    \overline{sp} = \emptyset \rightsquigarrow \overline{sp} = \left\{ \left( %
\right)$\\
    Hence, we have
    \begin{align*}
    &\text{sum 1: } 0 + d\left( %
 \right\}
    \end{align*}
    Updating its partial survivor history and label:
    \[
    \overline{sp} = \emptyset \rightsquigarrow \overline{sp} = \left\{ \left( %
\right)$\\
    Hence, we have
    \begin{align*}
    &\text{sum 1: } 0 + d\left( %
 \right\}
    \end{align*}
    Updating its partial survivor history and label:
    \[
    \overline{sp} = \emptyset \rightsquigarrow \overline{sp} = \left\{ \left( %
\right)$\\
    Hence, we have
    \begin{align*}
    &\text{sum 1: } 0 + d\left( %
 \right\}
    \end{align*}
    Updating its partial survivor history and label:
    \[
    \overline{sp} = \emptyset \rightsquigarrow \overline{sp} = \left\{ \left( %
    }
    \caption{Complete Trellis diagram at stage $i=2$.}
    \label{fig:trellis_topology_2}
\end{figure}

For $i=3$, one will get that $\bar d$ is at least $2$ for all the possible vertices $X_4(\xi)$. This means that the algorithm stops with $d_{\min}=1$ and $sp_{\min}=\left\{ \left( %
\right) \right\}$. The output is the list of minimal codewords $c_{\min}$ that correspond to the paths in $sp_{\min}$, in this case it is only one path and corresponding codeword. The labels of this path are 
\begin{align*}
(U_0(\xi),W_0(\xi))&=\left( (1),%
\right)
\end{align*}
and hence, the corresponding codeword is $$W_0(\xi)+W_1(\xi)\eta +W_2(\xi)\eta^2=%
,$$
which is indeed the codeword that was sent. The message corresponding to this codeword is $U_0(\xi)+U_1(\xi)\eta+U_2(\xi)\eta^2=1+\xi\eta=1+z_2$.

The following subfigures illustrate the decoding process.
\newpage

\begin{figure}[htbp]
    \centering
    \resizebox{0.8\textwidth}{!}{
    %
    }
    \caption{Trellis transitions originating from state $X_2(\xi) = %
$ to all destination states at stage $i+1=3$.}
    \label{fig:trellis_state_00}
\end{figure}

\newpage

\begin{figure}[htbp]
    \centering
    \resizebox{0.85\textwidth}{!}{
    %
    }
    \caption{Trellis transitions originating from state $X_2(\xi) = %
$ to all destination states at stage $i+1=3$.}
    \label{fig:trellis_state_1xi}
\end{figure}

\newpage

\begin{figure}[htbp]
    \centering
    \resizebox{0.85\textwidth}{!}{
    %
    }
    \caption{Trellis transitions originating from state $X_2(\xi) = %
$ to all destination states at stage $i+1=3$.}
    \label{fig:trellis_state_xixi2}
\end{figure}

\newpage

\begin{figure}[htbp]
    \centering
    \resizebox{0.85\textwidth}{!}{%
    %
%
    }
    \caption{Trellis transitions originating from state $X_2(\xi) = %
$ to all destination states at stage $i+1=3$.}
    \label{fig:trellis_state_xi1}
\end{figure}

\end{document}